\newcommand{\cmark}{\ding{51}}%
\newcommand{\xmark}{\ding{55}}%
\providecommand{\SINR}{\ensuremath{\mathsf{SINR}}}
\begin{document}

\title{MM Algorithms for Joint Independent Subspace Analysis with Application to Blind Single and Multi-Source Extraction}

\author{Robin~Scheibler,~\IEEEmembership{Member,~IEEE,}
        and~Nobutaka~Ono,~\IEEEmembership{Senior Member,~IEEE,}
\thanks{Graduate School of Systems Design, Tokyo Metropolitan University, 6-6 Asahigaoka, Hino city, Tokyo, 191-0065 Japan (e-mail: robin.scheibler@ieee.org)}
\thanks{This research was supported by JSPS KAKENHI Grant Numbers JP16H01735 and JST CREST Grant Number JPMJCR19A3.}
\thanks{The software to reproduce the results of this paper is available at \protect\url{https://github.com/fakufaku/jisamm}.}}


\maketitle

\begin{abstract}
  In this work, we propose efficient algorithms for joint independent subspace analysis (JISA), an extension of independent component analysis that deals with parallel mixtures, where not all the components are independent.
We derive an algorithmic framework for JISA based on the majorization-minimization (MM) optimization technique (JISA-MM).
We use a well-known inequality for super-Gaussian sources to derive a surrogate function of the negative log-likelihood of the observed data.
The minimization of this surrogate function leads to a variant of the hybrid exact-approximate diagonalization problem, but where multiple demixing vectors are grouped together.
In the spirit of auxiliary function based independent vector analysis (AuxIVA), we propose several updates that can be applied alternately to one, or jointly to two, groups of demixing vectors.

Recently, blind extraction of one or more sources has gained interest as a reasonable way of exploiting larger microphone arrays to achieve better separation.
In particular, several MM algorithms have been proposed for overdetermined IVA (OverIVA).
By applying JISA-MM, we are not only able to rederive these in a general manner, but also find several new algorithms.
We run extensive numerical experiments to evaluate their performance, and compare it to that of full separation with AuxIVA.
We find that algorithms using pairwise updates of two sources, or of one source and the background have the fastest convergence, and are able to separate target sources quickly and precisely from the background.
In addition, we characterize the performance of all algorithms under a large number of noise, reverberation, and background mismatch conditions.

\end{abstract}

\begin{IEEEkeywords}
Blind source separation, joint independent subspace analysis, overdetermined, majorization-minimization optimization, array signal processing
\end{IEEEkeywords}

%
\IEEEpeerreviewmaketitle

\section{Introduction}
%
%
%
%

\IEEEPARstart{B}{lind} source separation (BSS) is the problem of recovering several signals from one or more of their mixtures, without any side information.
This problem appears in several domains --- among others, audio~\cite{Makino:2018iq}, e.g., for the separation of speech~\cite{Makino:2007vg} and  music~\cite{Cano:2019dw}, biomedical for electrocardiogram~\cite{ZARZOSO:1997dt} and electroencephalogram~\cite{Cong2019}, and digital communications~\cite{Yang:2018kr}.
By far the most popular technique for blind source separation (BSS) is independent component analysis (ICA) which only requires statistical independence of the sources \cite{Comon:1994kr}.
A common variant of vanilla BSS is when there are multiple parallel mixtures whose latent sources have some statistical dependence.
This is known as independent vector analysis (IVA)~\cite{Hiroe:2006ib,Kim:2006ex}, or sometimes joint blind source separation~\cite{YiOuLi:2009gc}.
IVA type of problems are very common when dealing with convolutive mixtures, such as in audio source separation.
By working in the frequency domain, the separation problem can be carried out in parallel on sub-bands~\cite{Smaragdis:1998kl}.
ICA and IVA are applicable for BSS when the number of latent sources is the same as the number of sensors --- the so-called \textit{determined} case.
Independent subspace analysis (ISA) extends ICA to the case where not all latent sources are independent~\cite{Comon:1995vw,DeLathauwer:1995ww,Cardoso:1998cw}.
Instead, each source spans a subspace of dimension possibly larger than one.
Recently, joint independent subspace analysis (JISA) has been proposed to combine the approaches of IVA and ISA~\cite{Silva:ca,Lahat:2016fg,Lahat:2015bp}.

Majorization-minimization (MM), also known as optimization transfer or auxiliary function technique, is a popular way of minimizing challenging objective functions~\cite{Lange:2016wp}.
Rather than directly minimizing the objective, it instead minimizes a \textit{surrogate} function that is both tangent to and majorizes the objective everywhere.
Typically, the surrogate function is chosen to be easier to optimize than the original objective, e.g., smooth and/or with a global, closed-form solution.
As such, MM algorithms are inherently stable and require little to no tuning.

Auxiliary function based ICA and IVA (AuxICA and AuxIVA, respectively) algorithms~\cite{Ono:2010hh,Ono:2011tn,Ono:2020iva} apply this technique to the minimization of the cost functions of ICA and IVA.
AuxIVA is applicable to the separation of super-Gaussian spherical sources, which covers a large number of popular source models (see \cite{Ono:2020iva} for some examples).
The minimization of the surrogate function it uses leads to the so-called hybrid exact-approximate diagonalization (HEAD) problem~\cite{Yeredor:hr,Weiss:2017il}.
While no general solution to this problem is known, AuxIVA uses the \textit{iterative projection} (IP) rules to alternately update the demixing vector associated with each source (IP)~\cite{Ono:2011tn}.
Recently, more efficient joint pairwise updates (IP2) have been proposed~\cite{Ono:2020iva,Ono:2018wk}.
Beyond AuxIVA, the IP rules underpin a large number of separation algorithms using more sophisticated source models, e.g., low-rank~\cite{Kitamura:2016vj}, or deeply learnt~\cite{Makishima:2019fl,Kameoka:2019be}.

Our motivating application is audio convolutive mixture separation.
Traditionally, ICA has been run separately for each frequency sub-band~\cite{Smaragdis:1998kl}, with a permutation alignment step~\cite{Sawada:fk}.
Nevertheless, this extra step is notoriously hard to get right and avoiding it is desirable.
This is where IVA enters the picture as it allows to perform the separation jointly over frequencies, and techniques based on IP form now the state of the art in determined audio BSS~\cite{Ono:2011tn,Ono:2020iva,Kitamura:2016vj,Makishima:2019fl,Kameoka:2019be}.
%
Now, all these techniques are determined, and hence they will attempt at separating as many sources as there are sensors.
It is also accepted wisdom that using more microphones adds robustness and improves performance.
However, there are rarely more than two or three sources simultaneously active and trying to separate more is wasteful.
We thus address the problem of blind source and multi-source extraction (BSE and BMSE, respectively), defined as the recovery of $K$ sound sources recorded with $M$ microphones when $K < M$.  
A straightforward algorithm is to separate $M$ sources, and retain the $K$ outputs with the largest power.
Alternatives to power-based selection exist, for example~\cite{Kitamura:2015fi,Wang:ci,Wang:2018bz}.
Due to the large number of parameters, $\calO(M^2)$, to estimate, such approaches come with a high computational cost.
Ideally, we want to estimate no more than $\calO(KM)$ parameters.

Several methods with better complexities have been proposed.
These methods fall broadly in two categories.
First, some methods can directly tackle BMSE~\cite{Murata:1998ul,Amari:1999jy}, but may require regularization~\cite{Nishikawa:2004ug}.
Second, methods that first reduce the number of channels to $K$ and then apply a determined separation algorithm.
This is done for example by selecting the best $K$ channels~\cite{Nishikawa:2004bn,Osterwise:2014fe}, or by principal component analysis (PCA)~\cite{Osterwise:2014fe,Joho:2000tt,Lee:2007bw}.
Nevertheless, these methods inherently risk removing some target signal upfront, irremediably degrading performance.
Anecdotally, a few methods have been proposed for instantaneous mixtures~\cite{Fu:jx,Souden:2006wf}, and in the time-domain~\cite{Diamantaras:ib}.
All the above methods are single mixture methods that require permutation alignment.
Recently, the single source case has been tackled with a Gaussian background model spanning the subspace not occupied by the target source~\cite{Koldovsky:fn}.
In fact, this model falls squarely in the JISA framework.
In previous work, we extended it to BMSE and proposed the efficient MM-based overdetermined IVA (OverIVA) algorithm~\cite{Scheibler:2019vx}.
For a single source, the MM approach leads to fast independent vector extraction (FIVE)~\cite{Scheibler:2019tt,Ikeshita:2020ic}, which has suprisingly fast convergence.

\textit{Contributions:}
In this work, we derive concrete algorithms for JISA based on the majorization-minimization optimization technique (JISA-MM).
The proposed algorithms extend AuxIVA to the JISA case.
We derive in this paper similar efficient rules to update one, or two, demixing sub-matrices for subspaces, rather than demixing vectors of sources\footnote{The rule for one sub-demixing matrix has been recently independently proposed by Ikeshita et al. in a less general manner~\cite{Ikeshita:2020ic}.}.
We further apply JISA-MM to derive some known~\cite{Scheibler:2019vx,Scheibler:2019tt} and new algorithms for OverIVA.
All the proposed algorithms are hyperparameter-free, guaranteed to decrease the value of the objective function, and converge very quickly.
We run extensive numerical experiments to compare the performance of the different algorithms for the BSE and BMSE of speech signals.
The comparison is made in terms of scale-invariant signal-to-distortion and signal-to-interference ratios (SI-SDR and SI-SIR, respectively)~\cite{LeRoux:2018tq} and convergence speed.
We also study the targets extraction success probability under a large range of signal-to-interference ratios (SINR), background conditions, and reverberation time.
We find that the proposed methods reach performance similar to full separation with AuxIVA, but at a fraction of the computational cost.
They are thus excellent candidates for practical implementations of BSS in multi-microphone systems.

\textit{Organization:}
The rest of this paper is organized as follows.
\sref{background} describes the signal model and gives an introduction to MM algorithms, AuxIVA, and IP.
In \sref{jisa_mm}, we derive the proposed framework for JISA based on MM optimization.
\sref{overiva} describes the BSE and BMSE problems and the different algorithms based on JISA-MM.
The numerical experiments and their results are presented in \sref{perfeval}.
\sref{conclusion} concludes this work.

\section{Background}
\seclabel{background}

We consider the problem of separating $F$ mixtures, each of $K$ components, and recorded using $M$ sensors.
The input signal model is as follows,
\begin{equation}
  x_{mfn} = \sum_{k=1}^K a_{mkf} s_{kfn}, \quad n = 1,\ldots, N,
  \elabel{mixture_model}
\end{equation}
where $x_{mfn}\in \C$ and $s_{kfn}\in\C$ are the $m$th sensor input and $k$th source signals, respectively, in the $f$th mixture, at time $n$.
The mixing coefficient $a_{mkf}\in\C$ controls the amount of the $k$th source in the $m$th sensor signal of the $f$th mixture.
The indices all run between $1$ and the corresponding capital letter, namely $F$, $K$, $M$, and $N$, respectively.

Such parallel mixtures most frequently appear in audio source separation.
Consider the convolutive mixture of $K$ sources recorded by $M$ microphones,
\begin{equation}
	\hat{x}_m[t] = \sum_{k=1}^K (\hat{a}_{mk} \star \hat{s}_k)[t], \quad \forall m =1,\ldots, M,
  \elabel{model_time_domain}
\end{equation}
where $\hat{a}_{mk}[t]$ is the impulse response between the $k$th source and $m$th microphone, and the operator $\star$ denotes convolution.
Then, \eref{mixture_model} is obtained by a time-frequency transformation, typically, the short time Fourier transform (STFT)~\cite{Allen:1977in}, where convolution becomes frequency-wise multiplication.
There, $x_{mfn}$ and $s_{kfn}$ are the STFT of $\hat{x}_m[t]$ and $\hat{s}_k[t]$, respectively, and $a_{mkf}$ is the discrete Fourier transform of $\hat{a}_{mk}[t]$.
The number of mixtures $F$ is the number of positive frequency bins, i.e. $F = \floor{F_{\operatorname{DFT}} / 2} + 1$, where $F_{\operatorname{DFT}}$ is the length of the DFT used.
Note that \eref{mixture_model} is an approximation of \eref{model_time_domain} that only holds if $F$ is sufficiently larger than the true maximum length of the impulse responses $a_{mk}[t]$, $\forall m, k$.

In the rest of the manuscript, we use lower and upper case bold letters for vectors and matrices, respectively.
Furthermore, $\mA^\top$, $\mA^\H$, $\det(\mA)$ and $\tr(\mA)$ denote the transpose, conjugate transpose, determinant, and trace of matrix $\mA$, respectively.
The conjugate of complex scalar $z\in \C$ is denoted $z^*$.
The Euclidean norm of vector $\vv\in\C^d$ is $\| \vv \| = (\vv^\H \vv)^{\frac{1}{2}}$.
Unless specified otherwise, indices $f$, $k$, $m$, and $n$ always take the ranges defined in this section.
As in much of the array signal processing literature, we group all channels in vectors, i.e.,
\begin{align}
  \vx_{fn} & = \begin{bmatrix} x_{1fn} & \cdots & x_{Mfn} \end{bmatrix}^\top, \\
  \vs_{fn} & = \begin{bmatrix} s_{1fn} & \cdots & s_{Kfn} \end{bmatrix}^\top.
\end{align}
At times, we will consider the vector of mixture components of the $k$th source at time $n$
\begin{align}
  \check{\vs}_{kn} = \begin{bmatrix} s_{k1n} & \cdots & s_{kFn} \end{bmatrix}^\top.
  \elabel{src_freq_vec}
\end{align}

\if0
\begin{table*}
  \centering
  \begin{tabular}{@{}llll@{}}
    \toprule
    \textbf{Distribution} & { $G(r)$} & {$\varphi(r) = \frac{G^\prime(r)}{2r}$} & \textbf{Note} \\
    \midrule
    Laplace & $r$ & $\frac{1}{2r}$ &  \\
    Time-varying Gaussian & $\log r^2$ & $\frac{1}{r^2}$ & quasi-likelihood \\
    Student-t & $\frac{\nu + 1}{2} \log\left(1 + \frac{r^2}{\nu}\right)$ & $\frac{\nu + 1}{\nu + r^2}$ & $\nu\in \R$, $\nu>0$ \\
    Generalized Normal & $\left(\frac{r}{\alpha}\right)^\beta$ &  $\frac{\beta r^{\beta - 2}}{2 \alpha^\beta}$ & $\beta \in (0,2]$, $\alpha > 0$\\
    Cauchy & $\log\left(1 + \left(\frac{r}{\gamma}\right)^2\right)$ & $\frac{1}{\gamma^2 + r^2}$ & $\gamma > 0$\\
           & $\frac{1}{a_1} \log \cosh(a_1 r)$ & $\frac{1}{2r} \tanh(a_1 r)$ & $1\leq a_1 \leq 2$ \\
           & $-\frac{1}{a_2} e^{-a_2 r^2 / 2}$ & $\frac{1}{2}e^{-a_2 r^2 / 2}$ & $a_2 \simeq 1$ \\
    \bottomrule
  \end{tabular}
  \caption{Some examples of Super-Gaussian contrast functions.}
  \tlabel{supergauss}
\end{table*}
\fi


\subsection{Independent Vector Analysis of Super-Gaussian Sources}
\seclabel{background_iva}

IVA operates in the determined case where the number of sources and sensors is the same, i.e., $M=K$.
It solves the source separation problem by finding demixing matrices $\mW_f\in \C^{M\times M}$, $f=1, \ldots, F$, such that,
\begin{equation}
  \vy_{fn} = \mW_f \vx_{fn}, \quad \forall f,n,
\end{equation}
are the source estimates and
\begin{equation} 
  \mW_f = \begin{bmatrix} \vw_{1f} & \cdots & \vw_{Mf} \end{bmatrix}^\H.
\end{equation}
IVA posits independence of the sources and a probabilistic model for their content (e.g., spectrograms in audio BSS).
Then, maximum likelihood estimation is used to obtain $\mW_f$, $\forall f$.
A useful model for speech and other temporally non-stationary signals is super-Gaussian~\cite{Benveniste:1990vs,Ono:2010hh}.
Compared to signals following a Gaussian distribution, super-Gaussian signals exhibit a wider range of amplitudes.
While very large components are rare for Gaussian signals, they are fewer, but not rare, for super-Gaussian signals.
We can formalize these requirements into the two following hypotheses.
\begin{hypothesis}[Independence of Sources]
  \label{hyp:independence}
  The separated sources are statistically independent
  \begin{equation}
    \check{\vs}_{kn} \perp \check{\vs}_{k^\prime n^\prime},\ \forall k \neq k^\prime, n, n^\prime,
  \end{equation}
  where $\check{\vs}_{kn}$ and $\check{\vs}_{k^\prime n^\prime}$ are defined as in \eref{src_freq_vec}.
\end{hypothesis}
\begin{hypothesis}[Super-Gaussian Spherical Contrast Function]
  \label{hyp:supergauss}
  The source vectors follow a circular multivariate probablility distribution is circular,
  \begin{equation}
    p_{\vs}(\check{\vs}_{kn}) = \frac{1}{c} e^{-G(\| \check{\vs}_{kn} \|_2)},
  \end{equation}
  where the normalization constant $c$ is independent of $\check{\vs}_{kn}$.
  In addition, $G(r)$ is a real continuous and differentiable function of a real variable $r$ satisfying that $G^\prime(r) / r$ is continuous everywhere and monotonically decreasing on $r \geq 0$ (see \cite{Ono:2020iva,Ono:2010hh} for details).
  This choice includes conventional contrast functions such as $\ell_1$-norm and $\log\cosh$~\cite{Ono:2020iva}.
\end{hypothesis}
Equipped with independence and a source model, it is possible to write explicitely the likelihood function of the data, via a change of variable,
\begin{align}
  \calL(\calW\,;\,\calX) = \prod_f|\det \mW_f|^{2 N} \prod_{kn} p_{\vs}(\check{\vy}_{kn}),
\end{align}
where $\calW = \{\mW_f\}_{f=1}^F$, $\calX = \{\vx_{fn}\}_{f=1, n=1}^{F,N}$, and
\begin{align}
  \check{\vy}_{kn} = \begin{bmatrix} \vw_{k1}^\H \vx_{1n} & \cdots & \vw_{kF}^\H \vx_{Fn} \end{bmatrix}^\top.
\end{align}
Maximiziation of this function is usually carried out via minimization of the negative log-likelihood function
\begin{align}
  \ell(\calW\,;\, \calX) = \sum_{kn} G(\|\check{\vy}_{kn}\|) - 2N\sum_f \log|\det \mW_f|.
  \elabel{cost_iva}
\end{align}
Depending on the choice of $G(r)$, the minimization of this function is non-trivial.
However, for super-Gaussian constrast functions, efficent MM algorithms can be built based on the following inequality~\cite{Ono:2010hh,Ono:2011tn,Ono:2020iva}.
\begin{lemma}[from \cite{Ono:2010hh}]
  \label{lem:supergauss}
  Let $G(r)$ be as defined in Hypothesis~\ref{hyp:supergauss}. Then,
  \begin{equation}
  G(r) \leq G^\prime(r_0)\frac{r^2}{2 r_0} + \left( G(r_0) - \frac{r_0}{2} G^\prime(r_0)\right),
  \end{equation}
  with equality for $r = r_0$.
\end{lemma}
In particular, the IP and IP2 algorithms propose rules to update one~\cite{Ono:2011tn} or two~\cite{Ono:2020iva}, respectively, demixing vectors at a time.


\subsection{Minimization-Maximization Algorithms}

The MM algorithm is a popular optimization technique that allows to tackle non-convex and non-smooth functions~\cite{Lange:2016wp}.
It operates by using a surrogate function that majorizes the objective.
The surrogate function is chosen so that its optimization is easier than the original objective.
The MM algorithm applied to the minimization of the function $f(\vtheta)$ can be summarized as follows.
\begin{proposition}
  \label{prop:mm}
  Let $Q(\vtheta, \hat{\vtheta})$ be a surrogate function such that
  \begin{align}
    Q(\vtheta, \hat{\vtheta}) & \geq f(\vtheta), \quad \forall \vtheta,\hat{\vtheta}, \elabel{surrogate_ineq} \\
    Q(\hat{\vtheta}, \hat{\vtheta}) & = f(\hat{\vtheta}). \elabel{surrogate_eq}
  \end{align}
  Given an initial point $\vtheta_0$, consider the sequence of iterates
  \begin{align}
    \vtheta_{t} = \underset{\vtheta}{\arg\min}\ Q(\vtheta, \vtheta_{t-1}), \quad t=1,\ldots, T.
    \elabel{mm_min_surrogate}
  \end{align}
  Then, the cost function is monotonically decreasing on the sequence, $\vtheta_0, \vtheta_1, ..., \vtheta_T$, i.e.,
  \begin{align}
    f(\vtheta_0) \geq f(\vtheta_1) \geq \ldots \geq f(\vtheta_T).
    \elabel{mm_update}
  \end{align}
\end{proposition}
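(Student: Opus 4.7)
The plan is to prove the chain of inequalities $f(\vtheta_0) \geq f(\vtheta_1) \geq \cdots \geq f(\vtheta_T)$ by establishing the one-step descent property $f(\vtheta_t) \leq f(\vtheta_{t-1})$ for every $t \in \{1,\ldots, T\}$, and then iterating. Once the single-step descent is in hand, the full monotonic sequence follows immediately by composition, so essentially all the content lives in one short chain of (in)equalities combining the three defining properties of the MM setup: majorization \eref{surrogate_ineq}, tangency \eref{surrogate_eq}, and the minimization step \eref{mm_min_surrogate}.

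The key three-link chain I would write out is
\begin{equation*}
f(\vtheta_t) \;\leq\; Q(\vtheta_t, \vtheta_{t-1}) \;\leq\; Q(\vtheta_{t-1}, \vtheta_{t-1}) \;=\; f(\vtheta_{t-1}),
\end{equation*}
where the first inequality applies the majorization property \eref{surrogate_ineq} at the point $\vtheta = \vtheta_t$ with anchor $\hat{\vtheta} = \vtheta_{t-1}$; the second inequality holds because $\vtheta_t$ is by construction \eref{mm_min_surrogate} a minimizer of $Q(\cdot, \vtheta_{t-1})$, so in particular its value there cannot exceed the value at $\vtheta_{t-1}$; and the final equality is the tangency condition \eref{surrogate_eq} specialized to $\hat{\vtheta} = \vtheta_{t-1}$.

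Iterating this descent inequality from $t=1$ up to $t=T$ yields \eref{mm_update}. I would also note in passing that the argument does not require $\vtheta_t$ to be the \emph{global} minimizer of $Q(\cdot, \vtheta_{t-1})$; any $\vtheta_t$ with $Q(\vtheta_t, \vtheta_{t-1}) \leq Q(\vtheta_{t-1}, \vtheta_{t-1})$ suffices, which is useful when discussing later algorithms that only perform partial minimization over subsets of variables.

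There is essentially no obstacle here — the statement is the canonical MM descent lemma and the proof is two lines. The only thing to be slightly careful about is making sure the surrogate property is invoked with the correct substitution of arguments (first slot is the current iterate $\vtheta_t$, second slot is the previous iterate $\vtheta_{t-1}$, and the tangency equality is evaluated with both slots equal to $\vtheta_{t-1}$). Since the proposition makes no claim about convergence of the iterates themselves or about strict decrease, no additional hypotheses (uniqueness of the minimizer, compactness, continuity beyond what \eref{surrogate_ineq}--\eref{surrogate_eq} provide) need to be invoked.
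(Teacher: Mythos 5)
Your proof is correct and is essentially identical to the paper's: both establish the single-step descent via the chain $f(\vtheta_{t-1}) = Q(\vtheta_{t-1},\vtheta_{t-1}) \geq Q(\vtheta_t,\vtheta_{t-1}) \geq f(\vtheta_t)$, invoking tangency, the minimization step, and majorization in turn (you just write the chain in the opposite direction). Your closing remark that only $Q(\vtheta_t,\vtheta_{t-1}) \leq Q(\vtheta_{t-1},\vtheta_{t-1})$ is needed is also made by the paper immediately after its proof.
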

\begin{proof}
  We simply apply properties of the surrogate and a minimization.
  For any $t=1,\ldots, T$,
  \begin{multline}
    f(\vtheta_{t-1}) = Q(\vtheta_{t-1}, \vtheta_{t-1}) \\
                     \geq \underset{\vtheta}{\min}\ Q(\vtheta, \vtheta_{t-1})
                     = Q(\vtheta_t, \vtheta_{t-1}) \geq f(\vtheta_t),
  \end{multline}
  where we used in order, \eref{surrogate_eq}, \eref{mm_update}, and finally \eref{surrogate_ineq}.
\end{proof}
As can be seen, conveniently little properties of $f$ are required.
In addition, the minimization step of \eref{mm_min_surrogate} can be relaxed to any update that decreases the value of $Q$ without violating Proposition~\ref{prop:mm}.
The MM method has been successfully applied to multi-dimensional scaling~\cite{DeLeeuw:1997vw}, sparse norm minimization as the popular iteratively reweighted least-squares algorithm~\cite{Daubechies:2010hf}, and to IVA~\cite{Ono:2011tn,Ono:2020iva}.

\if0
\subsection{Iterative Projection}
\seclabel{iterative_projection}

The AuxIVA algorithm is obtained by applying the MM technique to \eref{cost_iva}.
It makes use of the following inequality for super-Gaussian contrast functions.
\begin{lemma}[from \cite{Ono:2010hh}]
  \label{lem:supergauss}
  Let $G(r)$ be as defined in Hypothesis~\ref{hyp:supergauss}. Then,
  \begin{equation}
  G(r) \leq G^\prime(r_0)\frac{r^2}{2 r_0} + \left( G(r_0) - \frac{r_0}{2} G^\prime(r_0)\right),
  \end{equation}
  with equality for $r = r_0$.
\end{lemma}
Introducing the inequality in the cost function~\eref{cost_iva}, we obtain the surrogate function,
\begin{multline}
  \ell_2(\calW\,;\,\calX) = N \sum_{fk} \vw_{kf}^\H \mV_{kf} \vw_{kf} \\
  - 2 N \sum_f \log|\det \mW_f| + \text{constant},
  \elabel{cost_auxiva}
\end{multline}
such that $\ell(\calW\,;\,\calX) \leq \ell_2(\calW\,;\,\calX)$, with
\begin{align}
  \mV_{kf} = \frac{1}{N} \sum_n \varphi(r_{kn}) \vx_{fn} \vx_{fn}^\H,
\end{align}
where
\begin{align}
  \varphi(r) = \frac{G^\prime(r)}{2r},
  \elabel{covmatrix_weight}
\end{align}
and $r_{kn}$ is an auxiliary variable.
Concrete examples for the choice of $G(r)$ and corresponding $\varphi(r)$ are discussed in~\cite{Ono:2020iva}.
To obtain an MM algorithm, we take $r_{kn}$ as the norm of the previous source estimate, i.e. $\| \check{\vy}_{kn}\|_2$.

The majorization \eref{cost_auxiva} brings two advantages.
First, the function $\ell_2$ is separable over mixtures, which means we can solve for every $\mW_f$ individually.
Second, we replaced the function $G(r)$ by nicer quadratic forms.
However, due to the $\log\det$ term, optimization of \eref{cost_auxiva} with respect to $\mW_f$ is still challenging.
The gradient of \eref{cost_auxiva} with respect to $\vw_{kf}^*$ is
\begin{equation}
  \nabla_{\vw_{kf}^*}\ell_2 = \mV_{kf} \vw_{kf} - \mW_f^{-1}\ve_k,
  \elabel{gradient_auxiva}
\end{equation}
where $\ve_k$ is the canonical basis vector with a one at position $k$ and zeros everywhere else.
Repeating for $k=1,\ldots,M$, equating to zero, and rearranging leads to the so-called hybrid exact-approximate diagonalization problem (HEAD)~\cite{Yeredor:hr},
\begin{equation}
  \mW_f \begin{bmatrix} \mV_{1f} \vw_{1f} & \cdots & \mV_{Mf} \vw_{Mf} \end{bmatrix} = \mI_M,
  \elabel{head}
\end{equation}
where $\mI_M$ is the $M$-dimensional identity matrix.
For $M=2$, \eref{head} is solved by the generalized eigenvalue decomposition of $\mV_1$ and $\mV_2$~\cite{Ono:2010hh}.
However, no closed form solution is known for $M > 2$.
The solution proposed in AuxIVA is to update $\mW_f$ one~\cite{Ono:2011tn} or two rows~\cite{Ono:2020iva} at a time.
For these two cases, there exist closed form solutions to the minimization of $\ell_2$, leading to efficient algorithms.

\subsubsection{Update of a Single Demixing Vector~\cite{Ono:2011tn}}

In its original form, AuxIVA works by minimizing \eref{cost_auxiva} for a single $\vw_{kf}$.
Namely, it updates $\vw_{kf}$ alternately for $k=1,\ldots,M$ by the following optimization,
\begin{equation}
  \underset{\vw\in \C^M}{\min}\ \vw \mV_{kf} \vw - 2 \log|\det\mW_f|,\quad \forall f,
\end{equation}
where we removed the terms in \eref{cost_auxiva} that do not depend on $\vw_{kf}$.
This problem boils down to solving $\nabla_{\vw^*_{kf}}\ell_2(\vw_{kf}) = 0$.
Without loss of generality, consider an update of $\vw_{1f}$, and further omit the mixture index $f$ for readability.

Setting the gradient to zero results in $M-1$ linear equations, and one quadratic equation,
\begin{align}
  \vw_1^\H \mV_1 \vw_1 & = 1, \\
  \wt{\mW} \mV_1 \vw_1 & = 0, \quad k=2,\ldots,M,
\end{align}
where $\wt{\mW} = \begin{bmatrix} \tilde{\vw}_2 & \cdots & \tilde{\vw}_M\end{bmatrix}^\H$ and $\tilde{\vw}_k$, are the demixing vectors obtained at the previous iteration.
It is thus clear that $\vw_1$ must live in the one-dimensional null-space of $\tilde{\mW}\mV_1$.
Once such a vector is found, the first equation can be satisfied by applying a scaling.

The algorithm is summarized by the following updates that are applied for $k=1,\ldots,M$, in order,
\begin{align}
  r_{kn} & \gets \frac{1}{F} \sum_f | \vw_{kf}^H \vx_{fn} |^2,\quad \forall n  \elabel{auxiva_up_var1} \\
  \mV_{kf} & \gets \frac{1}{N}\sum_n \varphi(r_{kn}) \vx_{fn} \vx_{fn}^H, \quad \forall n \elabel{auxiva_up_var2} \\
  \vw_{kf} & \gets \left( \wh{\mW}_f \mV_{kf} \right)^{-1} \ve_k, \quad \forall f\elabel{auxiva_up_proj} \\
  \vw_{kf} & \gets \frac{\vw_{kf}}{\sqrt{\vw_{kf}^H \mV_{kf} \vw_{kf}}}, \quad \forall f. \elabel{auxiva_up_norm}
\end{align}
The first two equations update the auxiliary variables.
Equation \eref{auxiva_up_proj} finds a vector in the null space formed by the $M-1$ other vectors, and \eref{auxiva_up_norm} scales it.
Pseudocode is given in \algref{auxiva-ip}.

\subsubsection{Pairwise Updates of Two Demixing Vectors~\cite{Ono:2020iva}}

Iterative projection 2 (IP2) updates jointly two demixing vectors~\cite{Ono:2020iva}.
It has been recently proposed and shown to lead to faster convergence and better separation performance.
In this case, \eref{cost_auxiva} is solved jointly for pairs of demixing vectors.
Removing all constant terms, we would like to find $\vw_{pf},\vw_{qf}$, $p\neq q$, solutions to
\begin{equation}
  \underset{\vu_p,\vu_q \in \C^M}{\min}\ \sum_{k\in\{p,q\}} \vu_k^\H \mV_{kf} \vu_k - 2 \log|\det\mW_f|,\quad \forall f.
  \elabel{cost_aux_iva_ip2}
\end{equation}
While there are several possibilities for the ordering of the pairs, we use the straightforward $\{p,q\} \in \{\{1,2\}, \{3,4\},\ldots\}$.
Without loss of generality, let us consider an update of $\vw_1$ and $\vw_2$.
In addition, we define $\wt{\mW} = \begin{bmatrix} \tilde{\vw}_3 & \cdots & \tilde{\vw}_M\end{bmatrix}^\H$ and $\tilde{\vw}_k$, are the demixing vectors obtained at the previous iteration.
Now, the HEAD equations for $\vw_1$ and $\vw_2$ are
\begin{align}
  \begin{array}{rclrcl}
    \vw_1^\H \mV_1 \vw_1 & = & 1,  &  \vw_1^\H \mV_2 \vw_2 & = & 0,  \\
    \vw_2^\H \mV_1 \vw_1 & = & 0,  &  \vw_2^\H \mV_2 \vw_2 & = & 1,  \\
    \wt{\mW} \mV_1 \vw_1 & = & 0,  &  \wt{\mW} \mV_2 \vw_2 & = & 0. 
  \end{array}
  \elabel{auxiva2_equations}
\end{align}
Notice that the bottom equations implies that $\vw_1 \in \Null(\wt{\mW}\mV_1)$ and $\vw_2 \in \Null(\wt{\mW}\mV_2)$.
Let $\mX$ be the $M\times 2$ matrix such that $\wt{\mW} \mX = 0$, and let $\mP_k = \mV_k^{-1} \mX$.
Let $\mP_1$ and $\mP_2$ be basis matrices for these two nullspaces, respectively, and find the two generalized eigenvectors corresponding to the problem
\begin{equation}
  \mP_1^\H \mV_1 \mP_1 \vh = \lambda \mP_2^\H \mV_2 \mP_2 \vh.
\end{equation}
Let the two eigenvectors be $\vh_1$ and $\vh_2$, corresponding to eigenvalues $\lambda_1 \geq \lambda_2$.
Then, the demixing vectors
\begin{align}
  \vw_1 & = \gamma_1 \mP_1 \vh_1,  & \vw_2 & = \gamma_2 \mP_2 \vh_2,
\end{align}
where $\gamma_1,\gamma_2$ are scaling factor to satisfy the quadratic equations,
are a solution of \eref{auxiva2_equations} and minimize \eref{cost_aux_iva_ip2} with respect to $\vw_1$ and $\vw_2$.

The resulting updates for IP2 are as follows. For a pair of demixing vectors $\vw_p$ and $\vw_q$, $p\neq q$,
\begin{align}
  r_{kn} & \gets \frac{1}{F} \sum_f | \vw_{kf}^H \vx_{fn} |^2,\quad \forall n, \forall k=p,q  \elabel{auxiva2_up_var1} \\
  \mV_{kf} & \gets \frac{1}{N}\sum_n \varphi(r_{kn}) \vx_{fn} \vx_{fn}^H, \quad \forall k=p,q \elabel{auxiva2_up_var2} \\
  \mP_{kf} & \gets (\mW_f \mV_k)^{-1} \begin{bmatrix} \ve_p & \ve_q \end{bmatrix}, \quad \forall f, \forall k=p, q
\end{align}
Then, find the eigenvectors of $(\mP_q^\H \mV_q \mP_q)^{-1}(\mP_p^\H \mV_p \mP_p)$.
Let $\vh_p$ and $\vh_q$ correspond to the largest and smallest eigenvalues, respectively.
Finally, update the demixing vectors as
\begin{align}
  \vw_{kf} & \gets \frac{\mP_k \vh_k}{\sqrt{\vh_k^\H \mP_k^\H \mV_k \mP_k \vh_k}}, \quad \forall f, \forall k=p, q.
  \elabel{auxiva2_up_norm}
\end{align}
We note that the cost of a pairwise update is marginally more than twice that of a single vector from last sub-section.
Moreover, because we update two vectors, the amortized cost is almost the same.
However, it has been shown that convergence in practice is much better and the pairwise updates lead to a superior algorithm.
Pseudocode is given in \algref{auxiva-ip2}.

\begin{algorithm}[t]
\SetKwInOut{Input}{Input}\SetKwInOut{Output}{Output}
\SetKw{KwBy}{by}
\Input{Microphones signals $\vx_{fn}\in\C^M$, $\forall f,n$}
\Output{Separated signals $\vy_{fn}\in\C^M$, $\forall f,n$}
\DontPrintSemicolon
$\mW_f \gets \mI_K,\ \forall f$\;
$\vy_{fn} \gets \vx_{fn},\ \forall f,n$\;
\For{loop $\leftarrow 1$ \KwTo $\text{max. iterations}$}{
  \For{$k \leftarrow 1$ \KwTo $M$}{
    $r_{kn} \gets \frac{1}{F} \sum_f |y_{kfn}|^2,\ \forall n$\;
    \For{$f \gets 1$ \KwTo $F$}{
      $\mV_{kf} \gets \frac{1}{N} \sum_n \varphi(r_{kn}) \vx_{fn} \vx_{fn}^H$\;
      $\vw_{kf} \gets (\wh{\mW}_f \mV_{kf})^{-1} \ve_k$\;
      $\vw_{kf} \gets \vw_{kf} \left(\vw_{kf}^H \mV_{kf} \vw_{kf}\right)^{-\frac{1}{2}}$\;
      $y_{kfn} \gets \vw_{kf} \vx_{fn},\ \forall f,n$\;
    }
  }
}
\vspace{0.5cm}
\caption{AuxIVA with Iterative Projection}
\label{alg:auxiva-ip}
\end{algorithm}

\begin{algorithm}[t]
\SetKwInOut{Input}{Input}\SetKwInOut{Output}{Output}
\SetKw{KwBy}{by}
\Input{Microphones signals $\vx_{fn}\in\C^M$, $\forall f,n$}
\Output{Separated signals $\vy_{fn}\in\C^M$, $\forall f,n$}
\DontPrintSemicolon
$\mW_f \gets \mI_M,\ \forall f$\;
$\vy_{fn} \gets \vx_{fn},\ \forall f, n$\;
\For{loop $\leftarrow 1$ \KwTo $\text{max. iterations}$ \KwBy $2$}{
  \For{$k \leftarrow 1$ \KwTo $2M$ \KwBy $2$}{
    $r_{qn} \gets \frac{1}{F} \sum_f |y_{kfn}|^2,\ \forall n, \forall q=k, k+1$\;
    \For{$f \gets 1$ \KwTo $F$}{
      \For{$q \gets k, k+1$}{
        $\mV_{qf} \gets \frac{1}{N} \sum_n \frac{1}{r_{kn}} \vx_{fn} \vx_{fn}^H$\;
        $\mP_{qf} \gets (\mW_f \mV_{qf})^{-1} [ \ve_k \, \ve_{k+1} ]$\;
        $\wt{\mV}_{qf} \gets \mP_{qf}^\H \mV_{qf} \mP_{qf}$\;
      }
      Let $\vh_k, \vh_{k+1}$ and $\lambda_k \geq \lambda_{k+1}$ be the two eigenvectors and values, respectively, of $\wt{\mV}_{(k+1)f}^{-1}\wt{\mV}_{kf}$\;
      \For{$q \gets k, k+1$}{
        $\vw_{qf} \gets \frac{\mP_{qf} \vh_q}{\sqrt{\vh_q^\H \wt{\mV}_{qf} \vh_q}}$\;
        $y_{qfn} \gets \vw_{qf} \vx_{fn},\ \forall f, n$\;
      }
    }
  }
}
\vspace{0.5cm}
\caption{AuxIVA with Iterative Projection 2}
\label{alg:auxiva-ip2}
\end{algorithm}
\fi

\section{MM Algorithms for Joint Independent Subspace Analysis}
\seclabel{jisa_mm}

ICA and IVA have traditionally operated in the determined regime where the number of independent sources is the same as that of sensors.
We will now consider the generalized model of JISA~\cite{Lahat:2016fg,Lahat:2015bp}, but specialized for super-Gaussian sources.
Then, we will derive efficient updates based on the MM technique for this model.

\subsection{JISA Model for Super-Gaussian Sources}

Let us divide the $M$-dimensional source space into $L$ subspaces.
We denote the index sets of sources belonging to the same subspace by $\calI_1,\ldots,\calI_L$, such that
\begin{equation}
  \bigcup_{\ell=1}^L\calI_\ell=\{1,\dots,M\},\quad \text{and}\quad \calI_\ell \bigcap\calI_{\ell^\prime} = \varnothing,\ \forall \ell \neq \ell^\prime.
\end{equation}
In this case, the parameters to estimate are the $LF$ sub-demixing matrices $\wb{\mW}_{gf}\in \C^{M \times |\calI_g|}$ such that
\begin{equation}
  \vy_{\ell fn} = \wb{\mW}_{\ell f}^\H \vx_{fn}, \quad \forall \ell ,f,
\end{equation}
are the separated subspaces.
The rows of $\wb{\mW}_{\ell f}$ are the demixing vectors $\vw_{kf}^\H$ with $k\in \calI_\ell $, such that
\begin{equation}
  \mW_f = \begin{bmatrix} \wb{\mW}_{1f} & \cdots & \wb{\mW}_{Lf} \end{bmatrix}^\H.
\end{equation}
To simplify the notation, we let $d_\ell = |\calI_\ell|$ for all $\ell$.
Without loss of generality, we take the index of the sources to be ordered in the sets, i.e. $\calI_1 = \{1,\ldots,d_1\}$, $\calI_2=\{d_1+1,\ldots,d_1 + d_2\}$, etc.
We use two hypotheses similar to Hypothesis~\ref{hyp:independence} and Hypothesis~\ref{hyp:supergauss}, but generalized to subspaces.
\begin{hypothesis}[Independence of Subspaces]
  Each subspace is independent from the others, namely,
  \begin{equation}
    \check{\vs}_{kn} \perp \check{\vs}_{k^\prime n^\prime},\ \forall k\in \calI_\ell, k^\prime \not\in \calI_{\ell}.
  \end{equation}
\end{hypothesis}
\begin{hypothesis}[Generalized Spherical Contrast Functions]
  Let $\vs_{\ell fn}$ be the vector whose components are the source signals $s_{kfn}$, for all $k\in\calI_\ell$.
  The probability distribution of the $\ell$th subspace is
  \begin{equation}
    p_\ell(\vs_{\ell 1n},\ldots,\vs_{\ell Fn}) = \frac{1}{c} e^{-G_\ell \left(\sqrt{\sum_f \vs_{\ell fn}^\H \mB_{\ell f}^{-1} \vs_{\ell fn}}\right)},
  \end{equation}
  where $G_\ell(r)$ is a super-Gaussian function as defined in Hypothesis~\ref{hyp:supergauss}.
  The matrices $\mB_{\ell f}$ describe the covariance structure within the $\ell $th subspace in mixture $f$.
  The normalization constant $c$ does not depend on the parameters to estimate.
  Note that the contrast function can be chosen differently for each subspace.
  This can be useful in case prior information of the distribution of the subspaces is available.
\end{hypothesis}
As in IVA, we can write the negative log-likelihood function of the observed data
\begin{multline}
  \calJ(\calW\,;\,\calX) = \sum_{\ell=1}^L \sum_{n=1}^N G_{\ell}\left(\sqrt{\sum_f \vy_{\ell fn}^\H \mB_{\ell f}^{-1} \vy_{\ell fn}}\right) \\
  - 2 N \sum_{f=1}^F \log|\det \mW_f | + \text{constant}.
  \elabel{cost_jisa}
\end{multline}
The inequality from Lemma~\ref{lem:supergauss} yields the surrogate function
\begin{multline}
  \calJ_2(\calW\,;\,\calX) = N \sum_{\ell f} \tr(\wb{\mW}_{\ell f}^\H \mV_{\ell f} \wb{\mW}_{\ell f} \mB_{\ell f}^{-1}) \\
  - 2 N \sum_f \log|\det \mW_f | + \text{constant},
  \elabel{cost_aux_jisa}
\end{multline}
where $\calJ(\calW\,;\,\calX) \leq \calJ_2(\calW\,;\,\calX)$, with the auxiliary variable
\begin{align}
\mV_{\ell f} = \frac{1}{N} \sum_n \varphi_\ell (\bar{r}_{\ell n}) \vx_{fn}\vx_{fn}^\H,
\end{align}
where 
\begin{equation}
  \varphi_\ell (r) = \frac{G_\ell^\prime(r)}{2r},
  \elabel{covmatrix_weight}
\end{equation}
and
\begin{align}
    \bar{r}_{\ell n} = \sqrt{\sum_f \vy_{\ell fn}^\H \mB_{\ell f}^{-1} \vy_{\ell fn}}.
\end{align}
As we will see now, this function can be efficiently optimized with updates similar to those of AuxIVA~\cite{Ono:2010hh,Ono:2011tn,Ono:2020iva}.
In this case too, we can find necessary optimality conditions for the solution.
\begin{proposition}[Necessary Optimality Conditions for \eref{cost_aux_jisa}]
  \label{prop:head_jisa}
  A stationary point of~\eref{cost_aux_jisa} must satisfy the following for all $f$,
  \begin{equation}
    \mW_f \begin{bmatrix} \mV_{1f} \wb{\mW}_{1f} & \cdots & \mV_{Lf} \wb{\mW}_{Lf} \end{bmatrix} =
    \begin{bmatrix}
      \mB_{1f} & \cdots & \vzero \\
      \vdots & \ddots & \vdots \\
      \vzero & \cdots & \mB{Lf}
    \end{bmatrix},
    \elabel{head_jisa}
  \end{equation}
  where the right hand side is a block diagonal matrix with $\mB_{1f},\ldots,\mB_{Lf}$ on the diagonal.
\end{proposition}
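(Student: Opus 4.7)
The plan is to compute the Wirtinger gradient of $\calJ_2(\calW\,;\,\calX)$ with respect to each individual demixing vector $\vw_{kf}^*$, set it to zero, and then reassemble the resulting column equations into a single block identity per mixture $f$. The trace term in~\eref{cost_aux_jisa} is quadratic in each $\vw_{kf}$ and the log-determinant is a standard matrix-calculus object, so the differentiation itself is routine; the work is mainly bookkeeping.

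First, for a fixed subspace index $\ell$, I would write $\wb{\mW}_{\ell f}=[\vw_{k_1 f}\,\cdots\,\vw_{k_{d_\ell} f}]$ with $\calI_\ell=\{k_1,\ldots,k_{d_\ell}\}$ and expand
\begin{equation*}
  \tr\bigl(\wb{\mW}_{\ell f}^\H \mV_{\ell f} \wb{\mW}_{\ell f} \mB_{\ell f}^{-1}\bigr)
  = \sum_{i,j} [\mB_{\ell f}^{-1}]_{ji}\, \vw_{k_i f}^\H \mV_{\ell f} \vw_{k_j f}.
\end{equation*}
Wirtinger differentiation with respect to $\vw_{k_a f}^*$ isolates the terms with $i=a$ and yields exactly the $a$th column of $\mV_{\ell f} \wb{\mW}_{\ell f} \mB_{\ell f}^{-1}$, using that $\mB_{\ell f}$ is Hermitian. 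For the log-determinant I would use the standard identity $\partial(-2\log|\det\mW_f|)/\partial\vw_{kf}^* = -\mW_f^{-1}\ve_k$, which follows from splitting $-2\log|\det\mW_f| = -\log\det\mW_f - \overline{\log\det\mW_f}$ and noting that only the first piece depends on $\vw_{kf}^*$.

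Setting the sum of the two contributions to zero for every $a=1,\ldots,d_\ell$ and stacking the $d_\ell$ column equations side by side gives
\begin{equation*}
  \mV_{\ell f}\wb{\mW}_{\ell f}\mB_{\ell f}^{-1} = \mW_f^{-1}\mE_\ell,
\end{equation*}
where $\mE_\ell=[\ve_{k_1}\,\cdots\,\ve_{k_{d_\ell}}]$ is the $M\times d_\ell$ selector picking out the columns of $\mW_f^{-1}$ associated with subspace $\ell$. Right-multiplying by $\mB_{\ell f}$ and horizontally concatenating over $\ell=1,\ldots,L$ yields
\begin{equation*}
  [\mV_{1f}\wb{\mW}_{1f}\,\cdots\,\mV_{Lf}\wb{\mW}_{Lf}]
   = \mW_f^{-1}\bigl[\mE_1\mB_{1f}\,\cdots\,\mE_L\mB_{Lf}\bigr].
\end{equation*}

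Finally, I would invoke the ordering convention $\calI_1=\{1,\ldots,d_1\}$, $\calI_2=\{d_1+1,\ldots,d_1+d_2\}$, and so on. Under this convention $[\mE_1\,\cdots\,\mE_L]=\mI_M$, so $[\mE_1\mB_{1f}\,\cdots\,\mE_L\mB_{Lf}]$ is precisely the block-diagonal matrix with blocks $\mB_{1f},\ldots,\mB_{Lf}$. Left-multiplying both sides by $\mW_f$ then yields \eref{head_jisa}. The only genuine obstacle in the argument is keeping the correspondence straight between the global index $k$ of a row in $\mW_f$ and the local column index of $\vw_{kf}$ inside $\wb{\mW}_{\ell f}$; once that correspondence is tracked by the selector matrices $\mE_\ell$, the block-diagonal structure emerges purely from the subspace ordering, with no further analytic step required.
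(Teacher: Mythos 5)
Your proposal is correct and follows essentially the same route as the paper: compute the Wirtinger gradient of the trace and log-determinant terms, set it to zero, and reassemble via the selector matrices $\mE_\ell$ into the block-diagonal identity. The only difference is that you differentiate column by column and then stack, whereas the paper writes the gradient directly at the level of the sub-demixing matrix $\wb{\mW}_{\ell f}$; the resulting stationarity equation $\mV_{\ell f}\wb{\mW}_{\ell f}\mB_{\ell f}^{-1}=\mW_f^{-1}\mE_\ell$ is identical.
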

\begin{proof}
  The gradient of~\eref{cost_aux_jisa} with respect to $\wb{\mW}_{\ell f}$ is
  \begin{equation}
    \nabla_{\wb{\mW}_{\ell f}^*} \calJ_2 = \mV_{\ell f} \wb{\mW}_{\ell f}\mB_{\ell f}^{-1} - \mW_f^{-1}\mE_\ell ,
  \end{equation}
  where
  \begin{equation}
    \mE_\ell =
    \begin{bmatrix} 
      \vzero_{d_\ell \times \sum_{\ell^\prime=1}^{\ell-1} d_{\ell^\prime}} & \mI_{|\calI_\ell|} & \vzero_{d_\ell\times \sum_{\ell^\prime=\ell+1}^L d_{\ell^\prime}}
    \end{bmatrix}^\top
  \end{equation}
  Setting to zero and rearranging the terms yields the result.
\end{proof}
Equation~\eref{head_jisa} is in fact a special case of the hybrid exact-approximate diagonalization (HEAD) problem~\cite{Yeredor:hr} that also appears in the derivation of AuxIVA~\cite{Ono:2011tn}.
The difference is that in this instance the covariance matrices are shared among the demixing vectors belonging to the same subspace.
As we will see in the next section, this allows to develop more efficient algorithms that jointly update multiple rows of the demixing matrices.

Similarly to the regular HEAD problem, when $L>2$, there is to the best of our knowledge no known general solution to \eref{head_jisa}.
Instead, we propose alternate updates of one sub-demixing matrix, keeping the others fixed.
The special case $L=2$ can be solved globally and applied to joint pairwise updates of two sub-demixing matrices.
These updates can be done in closed form and are derived in a similar way to AuxIVA's IP and IP2, respectively.

\subsection{Update of One Sub-demixing Matrix}
\seclabel{up_1_sub_demix}

We now consider the update of the sub-demixing matrix $\wb{\mW}_{\ell f}$ while keeping $\wb{\mW}_{\ell^\prime f}$, for all $\ell^\prime \neq \ell$, fixed.
That is, we want $\wb{\mW}_{\ell f}$ to be a solution of
\begin{align}
  \underset{\wb{\mW}\in \C^{M\times d_\ell}}{\min}\ \tr(\wb{\mW}^\H \mV_{\ell f} \wb{\mW}) - 2 \log | \det \mW_f |.
  \elabel{cost_aux_jisa_1sub}
\end{align}
We further omit the frequency index $f$ to lighten notation.
\begin{theorem}
  \label{thm:jisa_up1}
  Assume $\mV_\ell$ is full rank and let $\mX$ be an $M\times d_\ell$ matrix with full column rank such that
  \begin{equation}
    \mX^\H \mV_\ell \begin{bmatrix} \wb{\mW}_1 & \cdots & \wb{\mW}_{\ell-1} & \wb{\mW}_{\ell+1} & \cdots & \wb{\mW}_L \end{bmatrix} = 0.
    \elabel{up_1_sub_demix_null}
  \end{equation}
  Further let $\mB_\ell = (\mB_\ell^{1/2})^\H \mB_\ell^{1/2}$, $\mX^\H \mV_\ell \mX = \mQ^\H \mQ$, and $\mR$ be an arbitrary hermitian matrix.
  Then,
  \begin{equation}
    \wb{\mW}_\ell = \mX \mQ^{-1} \mR \mB_\ell^{1/2},
    \elabel{up_1_sub_demix_norm}
  \end{equation}
  globally minimizes \eref{cost_aux_jisa_1sub}.
\end{theorem}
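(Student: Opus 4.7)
The plan is to reduce \eref{cost_aux_jisa_1sub} (read with the $\mB_\ell^{-1}$ weighting inherited from \eref{cost_aux_jisa}) to an unconstrained optimization over a $d_\ell \times d_\ell$ matrix variable by leveraging the null-space characterization \eref{up_1_sub_demix_null} of $\mX$, and then solving that small problem by separating the quadratic and log-determinant parts.

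First I would show that the parameterization $\wb{\mW}_\ell = \mX\mC$ is without loss of generality at any stationary point. Computing $\nabla_{\wb{\mW}_\ell^*}\calJ_2$ and setting it to zero reproduces exactly the $\ell$-th block of the HEAD system \eref{head_jisa}, which decouples into the normalization $\wb{\mW}_\ell^\H \mV_\ell \wb{\mW}_\ell = \mB_\ell$ and the $\mV_\ell$-orthogonality $\wb{\mW}_{\ell'}^\H \mV_\ell \wb{\mW}_\ell = 0$ for all $\ell'\neq \ell$. By the defining property \eref{up_1_sub_demix_null}, the columns of $\mX$ form a basis of the $d_\ell$-dimensional $\mV_\ell$-orthogonal complement in which $\wb{\mW}_\ell$ must live, so any stationary point factors as $\wb{\mW}_\ell = \mX\mC$ for some invertible $\mC \in \C^{d_\ell\times d_\ell}$.

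Next I would substitute this form into the cost. The quadratic term collapses to $\tr(\mC^\H\mQ^\H\mQ\mC\mB_\ell^{-1})$ by the definition $\mX^\H\mV_\ell\mX=\mQ^\H\mQ$. For the log-determinant, column-wise multilinearity of $\det$ applied block-by-block to $[\wb{\mW}_1\,\cdots\,\mX\mC\,\cdots\,\wb{\mW}_L]^\H$ gives $\log|\det\mW| = \log|\det\mC| + \mathrm{const}$, where the constant is independent of $\mC$. Changing variable by $\mT=\mQ\mC$ (a bijection by invertibility of $\mQ$), the reduced problem is
\begin{align*}
  \min_{\mT \in \C^{d_\ell\times d_\ell}}\ \tr(\mT^\H\mT\mB_\ell^{-1}) - 2\log|\det\mT|.
\end{align*}
A short Wirtinger calculation shows the $\mT^*$-gradient vanishes iff $\mT^\H\mT = \mB_\ell$, equivalently $\mT = \mR\,\mB_\ell^{1/2}$ with $\mR^\H\mR = \mI$; back-substitution yields \eref{up_1_sub_demix_norm}.

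To upgrade stationarity to global optimality I would invoke coercivity of the reduced objective: it diverges to $+\infty$ whenever any singular value of $\mT$ tends to $0$ or $\infty$, so a global minimum exists and must therefore satisfy the first-order condition. A direct evaluation shows that every point of the stationary set attains the same objective value $d_\ell - \log|\det\mB_\ell|$, so the entire parameterized family is a set of global minimizers. The main obstacle I anticipate is the first step — cleanly arguing that \eref{up_1_sub_demix_null} captures the full set of candidate solutions, not just a convenient subfamily. This relies on the full column rank of $\mX$ together with invertibility of $\mV_\ell$ to ensure that the stated $\mV_\ell$-orthogonal complement has exactly dimension $d_\ell$. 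The trace-cyclicity and determinant bookkeeping in the second step are routine but need careful tracking of the factorization $\mB_\ell = (\mB_\ell^{1/2})^\H\mB_\ell^{1/2}$.
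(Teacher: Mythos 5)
Your proof is correct, and it takes a genuinely different route from the paper's. The paper proceeds by direct verification: it checks that the proposed $\wb{\mW}_\ell$ satisfies the necessary conditions of Proposition~\ref{prop:head_jisa}, then shows that the cost value is invariant to the choice of $\mR$ (via $|\det\mR|=1$) and to the choice of the basis $\mX$ of the null space, and concludes that all such stationary points are global minima. Your argument instead \emph{reduces} the problem: you use the stationarity conditions to justify the parametrization $\wb{\mW}_\ell=\mX\mC$, collapse the objective to the small unconstrained problem $\min_{\mT}\ \tr(\mT^\H\mT\mB_\ell^{-1})-2\log|\det\mT|$ with $\mT=\mQ\mC$, and solve it completely via the first-order condition $\mT^\H\mT=\mB_\ell$ together with coercivity. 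What your route buys is precisely the step the paper leaves implicit: showing that \emph{every} stationary point of \eref{cost_aux_jisa_1sub} lies in the stated family and that a minimizer exists and must be stationary, which is what upgrades ``stationary point with a common cost value'' to ``global minimum.'' The paper's route is shorter because it only exhibits the family and its invariances. One further point in your favor: your derivation pins down $\mR^\H\mR=\mI$, i.e., $\mR$ unitary, which is consistent with the paper's own use of $|\det\mR|=1$ in its proof; the word ``hermitian'' in the theorem statement appears to be a slip, and your computation makes the correct constraint explicit. Your claimed optimal value $d_\ell-\log|\det\mB_\ell|$ (up to the dropped constants) also checks out.
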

\begin{proof}
  We start by showing verifying the necessary condition of Proposition~\ref{prop:head_jisa} holds for $\wb{\mW}_\ell$.
  By the definition of $\mX$ in \eref{up_1_sub_demix_null} we have
  \begin{equation}
    \wb{\mW}_{\ell^\prime}^\H \mV_\ell \wb{\mW}_\ell = 0,\quad \forall \ell^\prime \neq \ell.
  \end{equation}
  We can easily check the other condition, that is
  \begin{align}
    \wb{\mW}_{\ell}^\H \mV_\ell \wb{\mW}_\ell & = \mB_\ell,
    \elabel{jisa_up1_quad}
  \end{align}
  regardless of $\mR$.

  We will now show that the choice of $\mR$ does not affect the value of the cost function.
  Because of \eref{jisa_up1_quad}, the value of the trace function is constant and we only consider the log-determinant term.
  Since $\det(\mR) = 1$, due to the multiplicative property of determinants, we have
  \begin{multline}
    \det[ \cdots \ \mX \mQ^{-1} \mR \mB_\ell^{1/2} \ \cdots ] = \det[ \cdots \ \mX \mQ^{-1} \mB_\ell^{1/2} \cdots ].
    \nonumber
  \end{multline}

  We are left now to show that the choice of $\mX$ satisfying \eref{up_1_sub_demix_null} does not affect the value of the cost function.
  Let $\mY\neq \mX$ be full column rank and satisfy \eref{up_1_sub_demix_null}
  Because $\mX$ and $\mY$ belong to the same subspace and are full column rank, then $\exists \mA$, invertible, such that $\mY = \mX \mA$.
  Furthermore, we can find $\mQ_2$ such that $\mY^\H \mV_g \mY = \mQ_2^\H \mQ_2$.
  Then, $\exists \mU$ orthonormal and such that $\mQ_2 = \mU \mQ \mA$.
  Thus, $\mY \mQ_2^{-1} = \mX \mQ^{-1} \mU$, and for the same reason as with $\mR$ above, the cost function is unchanged.
  We conclude that any such stationary point is a global minimum of \eref{cost_aux_jisa_1sub}.
\end{proof}
Now there are several ways to choose $\mX$.
One is to apply a QR factorization to the matrix
\begin{equation}
  \mV_\ell \begin{bmatrix} \wb{\mW}_1 & \cdots & \wb{\mW}_{\ell-1} & \wb{\mW}_{\ell+1} & \cdots & \wb{\mW}_L \end{bmatrix}.
\end{equation}
However, we use a trick similar to AuxIVA.
Let $\mU$ be of size $M\times d_g$ and such that the following matrix is invertible,
\begin{equation}
  \wh{\mW} = \begin{bmatrix} \wb{\mW}_1 & \cdots & \wb{\mW}_{\ell-1} & \mU & \wb{\mW}_{\ell+1} & \cdots & \wb{\mW}_L \end{bmatrix}^\H.
\end{equation}
The value of $\wb{\mW}_\ell$ at the previous iterate is a good choice for $\mU$ in practice.
Then, we choose
\begin{equation}
  \mX = (\wh{\mW} \mV_\ell)^{-1} \mE_\ell.
\end{equation}
One can easily check that \eref{up_1_sub_demix_null} is satisfied, and we found it to be faster than QR decomposition in practice.

\subsection{The Special Case of Two Subspaces}
\seclabel{two_subspaces}

The case of two subspaces turns out to be special, and the surrogate function \eref{cost_aux_jisa} can be minimized globally.
\begin{theorem}[JISA with $L=2$ subspaces]
  \label{thm:jisa_L2}
  Consider \eref{cost_aux_jisa} with $L=2$, and where $\mV_1$ and $\mV_2$ are full rank, positive definite matrices.
  Let $\lambda_1 \geq \ldots \geq \lambda_M$, and $\vu_1, \ldots, \vu_M$ be the eigenvalues and eigenvectors, respectively, of the eigenvalue problem
  \begin{equation}
    \mQ^{-\H} \mV_1 \mQ^{-1} \vu = \lambda \vu,
  \end{equation}
  where $\mQ$ is a square matrix such that $\mQ^\H \mQ = \mV_2$.
  Let in addition $\mB_\ell^{1/2}$ be such that $(\mB_\ell^{1/2})^\H \mB_\ell^{1/2} = \mB_\ell$, for $\ell=1,2$.
  Then, a global minimum of \eref{cost_aux_jisa} is attained for
  \begin{align}
    \wb{\mW}_1 & = \mQ^{-1} \mU_1 \mR_1 \mB_1^{1/2}, \\
    \wb{\mW}_2 & = \mQ^{-1} \mU_2 \mR_2 \mB_2^{1/2},
    \elabel{jisa_L2_sol}
  \end{align}
  where
  \begin{align}
    \mU_1 & = \begin{bmatrix} \vu_{M} & \cdots & \vu_{M - d_1} \end{bmatrix} \mD^{-\frac{1}{2}}, \\
    \mU_2 & = \begin{bmatrix} \vu_{1} & \cdots & \vu_{M - d_1 - 1} \end{bmatrix},
  \end{align}
  with $\mD = \diag(\lambda_M,\ldots,\lambda_{M-d_1})$,
  and $\mR_1,\mR_2$ are arbitrary hermitian matrices.
\end{theorem}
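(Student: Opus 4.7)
The plan is to apply a sequence of invertible changes of variables to reduce \eref{cost_aux_jisa} with $L=2$ to a diagonal, whitened canonical form, then use Fischer's inequality for block positive semidefinite matrices to produce a separable lower bound on the cost, minimize each piece by standard eigenvalue arguments, and finally verify that the proposed solution saturates every bound used.

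First, I whiten with respect to $\mV_2$ by setting $\wt{\mW}_\ell = \mQ \wb{\mW}_\ell$ with $\mQ^\H \mQ = \mV_2$. The quadratic term for subspace $2$ becomes $\tr(\wt{\mW}_2^\H \wt{\mW}_2 \mB_2^{-1})$, the one for subspace $1$ becomes $\tr(\wt{\mW}_1^\H \mS \wt{\mW}_1 \mB_1^{-1})$ with $\mS = \mQ^{-\H}\mV_1\mQ^{-1}$, and the log-determinant only shifts by the constant $\log|\det\mQ|$. Absorbing $\mB_\ell^{1/2}$ via the substitution $\wh{\mW}_\ell = \wt{\mW}_\ell \mB_\ell^{-1/2}$ (legal by the cyclic property of the trace) and then diagonalizing $\mS = \mU \mLambda \mU^\H$ with $\lambda_1 \geq \cdots \geq \lambda_M$ and rotating by $\mU^\H$ reduces the problem, up to additive constants, to
\[
  \min_{\mA,\mB}\ \tr(\mA^\H \mLambda \mA) + \tr(\mB^\H \mB) - 2 \log |\det [\mA\ \mB]|,
\]
where $\mA \in \C^{M\times d_1}$ and $\mB \in \C^{M\times d_2}$ are the transformed variables.

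Next, I apply Fischer's inequality to the block Gram matrix of $[\mA\ \mB]$, yielding $|\det[\mA\ \mB]|^2 \leq \det(\mA^\H \mA) \det(\mB^\H \mB)$ with equality iff $\mA^\H \mB = 0$. Splitting the $-2\log|\det[\mA\ \mB]|$ term produces a separable lower bound on the cost, $f \geq g_1(\mA) + g_2(\mB)$, where each $g_\ell$ has the form $\tr(\cdot^\H \mLambda_\ell \cdot) - \log\det(\cdot^\H \cdot)$ with $\mLambda_1 = \mLambda$ and $\mLambda_2 = \mI$. Each $g_\ell$ is minimized by parameterizing its argument as $\mT \mR^{1/2}$ with $\mT^\H \mT = \mI$ and PSD $\mR$: the inner minimization over $\mR$ is strictly convex with closed-form minimizer $\mR = (\mT^\H \mLambda_\ell \mT)^{-1}$, giving value $d_\ell + \log\det(\mT^\H \mLambda_\ell \mT)$, and the outer minimization over orthonormal $\mT$ is the classical Poincar\'e / Ky Fan problem whose minimum is attained when $\mT$ spans the $d_\ell$ smallest eigenvectors of $\mLambda_\ell$. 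For $g_1$ this forces $\mA$ into the span of the bottom $d_1$ coordinate axes, and for $g_2$ it merely requires $\mB^\H \mB = \mI$, which is achievable on any $d_2$-dimensional subspace.

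Reverting the three changes of variables reproduces \eref{jisa_L2_sol}, with $\mR_1, \mR_2$ parameterizing a right-multiplicative freedom that leaves every term in the cost invariant by the same multiplicative-determinant argument as in the proof of Theorem~\ref{thm:jisa_up1}. The key obstacle is showing that the unconstrained minima of $g_1$ and $g_2$ can be attained simultaneously, since a priori Fischer's bound could remain slack at the joint minimum. This works here precisely because the optimal eigenspaces are complementary --- the bottom $d_1$ and top $d_2$ coordinates in the canonical form --- so the orthogonality condition $\mA^\H \mB = 0$ holds automatically at the proposed solution and Fischer's bound is saturated. Combined with equality in the eigenvalue-ordering bounds for each $g_\ell$, this shows that the proposed choice globally minimizes \eref{cost_aux_jisa} for $L=2$.
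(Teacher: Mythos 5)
Your proof is correct, but it takes a genuinely different route from the paper's. The paper argues by first checking that the proposed $\wb{\mW}_1,\wb{\mW}_2$ satisfy the necessary stationarity conditions of Proposition~\ref{prop:head_jisa}, observing that \emph{every} assignment of the generalized eigenvectors to the two blocks yields such a stationary point, and then showing by a pairwise column-swap argument on $\det[\mU_1\ \mU_2]$ (the ratio $\sqrt{\lambda_1/\lambda_M}\geq 1$) that the stated assignment --- smallest eigenvalues to subspace $1$ --- gives the smallest cost among these candidates. You instead produce a direct certificate of global optimality: after whitening by $\mQ$, absorbing $\mB_\ell^{1/2}$, and rotating into the eigenbasis of $\mQ^{-\H}\mV_1\mQ^{-1}$, you lower-bound the cost via Fischer's inequality $|\det[\mA\ \mB]|^2\leq\det(\mA^\H\mA)\det(\mB^\H\mB)$, minimize each separable piece through the convex inner problem $\min_{\mR\succ 0}\tr(\mM\mR)-\log\det\mR$ and the Poincar\'e separation theorem, and verify that all bounds are tight simultaneously because the two optimal eigenspaces are orthogonal complements. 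What each approach buys: yours closes a gap the paper leaves implicit (that the global minimum is actually attained among the enumerated stationary points, rather than elsewhere or at infinity), while the paper's exchange argument is shorter and makes the combinatorial structure of the stationary points explicit. Two minor points you inherit from the paper rather than introduce: the invariance under right-multiplication by $\mR_\ell$ actually requires $\mR_\ell$ unitary (not Hermitian) for both the trace term and $|\det|$ to be preserved, and the theorem's column counts for $\mU_1,\mU_2$ contain an off-by-one ($\vu_{M-d_1+1}$ rather than $\vu_{M-d_1}$); your "bottom $d_1$ coordinates" reading is the correct one.
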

\begin{proof}
  We first prove that the necessary condition from Proposition~\ref{prop:head_jisa} is satisfied.
  Due to the properties of the eigenvectors and the definition of $\mQ$,
  \begin{align}
    \mU_2^\H \mQ^{-\H} \mV_1 \mQ^{-1} \mU_1 & = \mU_2^\H \mU_1 = 0, \\
    \mU_1^\H \mQ^{-\H} \mV_2 \mQ^{-1} \mU_2 & = \mU_1^\H \mU_2 = 0,
  \end{align}
  and thus $\wb{\mW}_2^\H \mV_1 \wb{\mW}_1 = \wb{\mW}_1^\H \mV_2 \wb{\mW}_2 = 0$.
  Furthermore,
  \begin{align}
    \mU_1^\H \mQ^{-\H} \mV_1 \mQ^{-1} \mU_1 & = \mU_1^\H \mU_1 \mD = \mD, \\
    \mU_2^\H \mQ^{-\H} \mV_2 \mQ^{-1} \mU_2 & = \mU_2^\H \mU_2 = \mI,
  \end{align}
  so that $\wb{\mW}_\ell^\H \mV_\ell \wb{\mW}_2 = \mB_\ell$, for $\ell=1,2$.

  Now it is clear that the necessary conditions are fulfilled for any assignment of the eigenvectors to the columns of $\mU_1$ and $\mU_2$.
  Ignoring the hermitian matrices $\mR_1$ and $\mR_2$, each of the $\binom{n}{k}$ possible assignments corresponds to a stationary point of the surrogate function.
  We will show that swapping columns between $\mU_1$ and $\mU_2$ leads to a larger value of \eref{cost_aux_jisa}.
  Because the trace function is constant under the choice \eref{jisa_L2_sol}, we concentrate on the determinant.
  First, we can factorize the demixing matrix as follows
  \begin{equation}
    \mW^\H = \mQ^{-1}
    \begin{bmatrix} \mU_1 & \mU_2 \end{bmatrix}
    \begin{bmatrix} \mR_1 & \vzero \\ \vzero & \mR_2 \end{bmatrix}
    \begin{bmatrix} \mB_1^{1/2} & \vzero \\ \vzero & \mB_2^{1/2} \end{bmatrix}.
  \end{equation}
  Because the left-most and right-most terms are constant, they do not affect the value of the cost function.
  The second term from the right is a hermitian matrix and has unit determinant.
  We thus concentrate on $\begin{bmatrix} \mU_1 & \mU_2 \end{bmatrix}$.
  Without loss of generality, we will swap $\vu_1$ with $\vu_M/\sqrt{\lambda_M}$.
  We have the following equality,
  \begin{equation}
    \left| \det \begin{bmatrix} \frac{\vu_M}{\sqrt{\lambda_M}} & \vu_1 & \cdots \end{bmatrix} \right|
    = \sqrt{\frac{\lambda_1}{\lambda_M}} \left|\det \begin{bmatrix} \frac{\vu_1}{\sqrt{\lambda_1}} & \vu_M & \cdots \end{bmatrix}\right|
  \end{equation}
  and since $\frac{\lambda_1}{\lambda_M} \geq 1$, the proof follows.
\end{proof}
A less general version of this theorem was proved in~\cite{Scheibler:2019tt}.

\subsection{Pairwise Update of Two Sub-demixing Matrices}
\seclabel{jisa_update_2sub}

When there are more than two subspaces, it is still possible to exploit the result of the previous section to perform joint updates of pairs of sub-demixing matrices.
Without loss of generality, we consider an update of the pair $\wb{\mW}_1, \wb{\mW}_2$,
\begin{equation}
  \underset{\wb{\mW}_1\in\C^{M\times d_1} \atop \wb{\mW}_2\in\C^{M\times d_2}}{\min}\ \sum_{\ell\in\{1,2\}} \tr\left(\wb{\mW}_\ell^\H \mV_\ell\wb{\mW}_\ell\mB_\ell^{-1}\right) - 2 \log |\det \mW |.
  \elabel{cost_aux_jisa_2sub}
\end{equation}
In contrast with the two-subspace problem of previous section, $\mW$ has extra rows containing the fixed sub-demixing matrices $\wb{\mW}_3^\H,\ldots,\wb{\mW}_L^\H$.
\begin{theorem}
  \label{thm:jisamm_three_plus}
  Let $\mX$ be an $M \times (d_1 + d_2)$ full-rank matrix such that
  \begin{align}
    \mX^\H \begin{bmatrix}\wb{\mW}_3 & \cdots & \wb{\mW}_L \end{bmatrix} = 0,
  \end{align}
  and let $\mP_\ell = \mV_\ell^{-1}\mX$, $\ell=1,2$.
  Define $\wt{\mV}_\ell = \mP_\ell^\H \mV_\ell \mP_\ell$ and let $\wb{\mU}_1,\wb{\mU}_2$ be a solution to the two-subspace problem with matrices $\wt{\mV}_1^{-1},\wt{\mV}_2^{-1}$.
  Then, $\wb{\mW}_\ell = \mP_\ell \wt{\mV}_\ell^{-1} \wb{\mU}_\ell$, $\ell=1,2$, is a global minimizer of \eref{cost_aux_jisa_2sub}.
\end{theorem}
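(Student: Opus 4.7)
The strategy is to reduce the pairwise subproblem to the two-subspace problem solved by Theorem~\ref{thm:jisa_L2}, via a structural parameterization and a subsequent change of variables. First I would show that any global minimizer admits the form $\wb{\mW}_\ell = \mP_\ell \mA_\ell$; second I would introduce the substitution $\mA_\ell = \wt{\mV}_\ell^{-1}\wb{\mU}_\ell$ and verify that the resulting reduced cost is, up to an additive constant, the two-subspace cost on $\wb{\mU}_1,\wb{\mU}_2$ with data matrices $\wt{\mV}_1^{-1},\wt{\mV}_2^{-1}$.

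For the parameterization, I would compute the gradient of \eref{cost_aux_jisa_2sub} with respect to $\wb{\mW}_\ell^*$ exactly as in the proof of Proposition~\ref{prop:head_jisa}. Any global minimizer is stationary and hence must satisfy $\wb{\mW}_{\ell'}^\H \mV_\ell \wb{\mW}_\ell = 0$ for every $\ell'\geq 3$ and $\ell\in\{1,2\}$. Equivalently, $\mV_\ell \wb{\mW}_\ell$ lies in the column span of $\mX$, so $\wb{\mW}_\ell = \mP_\ell \mA_\ell$ for some $\mA_\ell \in \C^{(d_1+d_2)\times d_\ell}$. Since $\wt{\mV}_\ell = \mP_\ell^\H \mV_\ell \mP_\ell$ is positive definite, the additional change of variables $\mA_\ell = \wt{\mV}_\ell^{-1}\wb{\mU}_\ell$ is a bijection and may be applied without loss of generality. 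A direct calculation then yields $\tr(\wb{\mW}_\ell^\H \mV_\ell \wb{\mW}_\ell \mB_\ell^{-1}) = \tr(\wb{\mU}_\ell^\H \wt{\mV}_\ell^{-1} \wb{\mU}_\ell \mB_\ell^{-1})$, which matches the trace term of a two-subspace cost with matrices $\wt{\mV}_\ell^{-1}$.

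The main obstacle is the log-determinant, because it involves all of $\mW$ (not only the reparameterized blocks) and so does not obviously collapse into $|\det[\wb{\mU}_1,\wb{\mU}_2]|$. The key identity I would exploit is $\wb{\mW}_\ell^\H \mX = \wb{\mU}_\ell^\H$, which follows from $\mP_\ell^\H \mX = \wt{\mV}_\ell$. Setting $\mX_\perp = [\wb{\mW}_3 \cdots \wb{\mW}_L]$ and using $\mX_\perp^\H \mX = 0$ by construction of $\mX$, the product $\mW[\mX,\mX_\perp]$ becomes block lower triangular, and the block-triangular determinant formula gives $\det(\mW[\mX,\mX_\perp]) = \det[\wb{\mU}_1,\wb{\mU}_2]^\H \cdot \det(\mX_\perp^\H \mX_\perp)$. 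Hence $\log|\det\mW| = \log|\det[\wb{\mU}_1,\wb{\mU}_2]| + C$ with $C$ depending only on the fixed $\mX,\mX_\perp$. Combined with the trace reduction, \eref{cost_aux_jisa_2sub} becomes the two-subspace cost of Theorem~\ref{thm:jisa_L2} on $\wb{\mU}_1,\wb{\mU}_2$ with data matrices $\wt{\mV}_1^{-1},\wt{\mV}_2^{-1}$, up to an additive constant. Invoking Theorem~\ref{thm:jisa_L2} produces the global minimizer $\wb{\mU}_1,\wb{\mU}_2$, and back-substituting through $\wb{\mW}_\ell = \mP_\ell\wt{\mV}_\ell^{-1}\wb{\mU}_\ell$ finishes the argument.
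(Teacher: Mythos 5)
Your proof is correct and follows essentially the same route as the paper's: parametrize $\wb{\mW}_\ell$ through $\mP_\ell$, change variables so the trace term becomes $\tr(\wb{\mU}_\ell^\H\wt{\mV}_\ell^{-1}\wb{\mU}_\ell\mB_\ell^{-1})$, and reduce the log-determinant by multiplying with the fixed invertible matrix $[\mX\ \mX_\perp]$ and exploiting the zero block from $\mX^\H\mX_\perp=0$ (the paper performs the identical manipulation with the adjoint $\mA=[\mX\ \wt{\mW}]^\H$). Your explicit stationarity argument that any global minimizer must lie in the range of $\mP_\ell$ is a welcome addition the paper leaves implicit; the only nit is that the bordered matrix is block \emph{upper} triangular with your ordering, which does not affect the determinant identity.
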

\begin{proof}
  From the choice of $\mP_\ell$, $\ell=1,2$, it is clear that the necessary conditions,
  \begin{equation}
    \wb{\mW}_{\ell^\prime}^\H \mV_\ell \wb{\mW}_\ell = 0,\quad \forall \ell=1,2,\ \ell^\prime \geq 3,
  \end{equation}
  are satisfied.
  We will now introduce the parametrized $\wb{\mW}_\ell=\mP_\ell \wt{\mU}_\ell$ into~\eref{cost_aux_jisa}.
  Ignoring all constant terms, we have
  \begin{multline}
    \sum_{\ell=1,2} \tr \left(\wt{\mU}_\ell^\H \wt{\mV}_\ell \wt{\mU}_\ell \mB_\ell^{-1}\right) \\
    - 2 \log \left| \det\begin{bmatrix} \mP_1 \wt{\mU}_1 & \mP_2 \wt{\mU}_2 & \wt{\mW} \end{bmatrix} \right|,
    \elabel{subst_cost}
  \end{multline}
  where $\wt{\mW} = \begin{bmatrix}\wb{\mW}_3 & \cdots & \wb{\mW}_L \end{bmatrix}$.
  Now, consider the square invertible matrix $\mA = \begin{bmatrix} \mX & \wt{\mW} \end{bmatrix}^\H$.
  Applying $\mA$ and its inverse to the term in the determinant does not change the value of the cost function,
  \begin{multline}
    \det \left(\mA^{-1} \mA \begin{bmatrix} \mP_1 \wt{\mU}_1 & \mP_2 \wt{\mU}_2 & \wt{\mW} \end{bmatrix}\right) \\
    = \det\left(\mA^{-1}\begin{bmatrix} \mX^H \mP_1 \wt{\mU}_1 & \mX^\H \mP_2 \wt{\mU}_2 & \vzero \\ \wt{\mW}^H \mP_1 \wt{\mU}_1 & \wt{\mW}^\H \mP_2 \wt{\mU}_2 & \wt{\mW}^\H \wt{\mW} \end{bmatrix}\right)\\
    = \det(\mA^{-1})\, \det\begin{bmatrix} \wt{\mV}_1 \wt{\mU}_1 & \wt{\mV}_2 \wt{\mU}_2 \end{bmatrix}\, \det(\wt{\mW}^\H \wt{\mW}).
    \elabel{block_matrix}
  \end{multline}
  where the second equality is due to block diagonality, and the fact that, for $\ell=1,2$,
  \begin{equation}
    \mX^\H \mP_\ell = \mX^\H \mV_\ell^{-1} \mX = \mP_\ell^\H \mV_\ell \mP_\ell = \wt{\mV}_\ell.
  \end{equation}
  Finally, we use the substitution $\wb{\mU}_\ell=\wt{\mV}_\ell \wt{\mU}_\ell$, $\ell=1,2$, and replace in~\eref{subst_cost}.
  Up to a constant term, we obtain the cost function of the two subspace problem of \sref{two_subspaces}.
  Applying Theorem~\ref{thm:jisa_L2} and back-substituting yields the result.
\end{proof}
Now, by applying a little algebra, we can make this result more amenable to implementation.
\begin{corollary}
  Let $\wt{\mV}_2 = \mQ^\H \mQ$.
  Further let $\hat{\lambda}_1\geq\ldots \geq \hat{\lambda}_{d_1+d_2}$ and $\hat{\vu}_1,\ldots,\hat{\vu}_{d_1+d_2}$ be the eigenvalues and eigenvectors, respectively, of $\mQ^{-\H}\wt{\mV}_1 \mQ^{-1}$.
  Then, the solution to \eref{cost_aux_jisa_2sub} is
  \begin{align}
    \wb{\mW}_1 & = \mQ^{-1} [\vu_1, \ldots,\vu_{d_1}] \diag(\hat{\lambda}_1,\ldots,\hat{\lambda}_{d_1})^{-1/2} \mR_1 \mB_1^{1/2}, \nonumber \\
    \wb{\mW}_2 & = \mQ^{-1} [\vu_{d_1+1}, \ldots,\vu_{d_1+d_2}] \mR_2 \mB_2^{1/2}, \nonumber
  \end{align}
  for any hermitian matrices $\mR_1,\mR_2$.
\end{corollary}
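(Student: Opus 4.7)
The plan is to chain Theorem~\ref{thm:jisamm_three_plus} with Theorem~\ref{thm:jisa_L2}, using a carefully chosen square-root factor to re-express the reduced problem in the Corollary's notation. Theorem~\ref{thm:jisamm_three_plus} reduces the pairwise sub-demixing update to the two-subspace problem of \sref{two_subspaces} on the matrices $\wt{\mV}_1^{-1},\wt{\mV}_2^{-1}$, followed by the post-processing $\wb{\mW}_\ell = \mP_\ell\wt{\mV}_\ell^{-1}\wb{\mU}_\ell$ for $\ell=1,2$.

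I would then invoke Theorem~\ref{thm:jisa_L2} on this reduced problem. Theorem~\ref{thm:jisa_L2} requires a factor $\mQ_\star$ with $\mQ_\star^\H\mQ_\star = \wt{\mV}_2^{-1}$; the critical choice is $\mQ_\star = \mQ^{-\H}$, where $\mQ$ is the Corollary's factor of $\wt{\mV}_2$. One verifies $\mQ_\star^\H\mQ_\star = \mQ^{-1}\mQ^{-\H} = \wt{\mV}_2^{-1}$, and Theorem~\ref{thm:jisa_L2}'s eigenproblem $\mQ_\star^{-\H}\wt{\mV}_1^{-1}\mQ_\star^{-1}\vu = \lambda\vu$ unfolds to $\mQ\wt{\mV}_1^{-1}\mQ^\H\vu = \lambda\vu$. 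This matrix is the inverse of the Corollary's $\mQ^{-\H}\wt{\mV}_1\mQ^{-1}$, so the two share eigenvectors with reciprocal eigenvalues; sorting in decreasing order gives $\vu_i = \hat{\vu}_{M+1-i}$ and $\lambda_i = \hat{\lambda}_{M+1-i}^{-1}$ with $M = d_1+d_2$. In particular, Theorem~\ref{thm:jisa_L2}'s $\mU_1$ (smallest $d_1$ eigenvalues) is assembled from $\hat{\vu}_1,\ldots,\hat{\vu}_{d_1}$, and its scaling factor $\mD^{-1/2}$ translates to $\diag(\hat{\lambda}_1,\ldots,\hat{\lambda}_{d_1})^{+1/2}$ after the reciprocal correspondence, while $\mU_2$ uses $\hat{\vu}_{d_1+1},\ldots,\hat{\vu}_{d_1+d_2}$.

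Plugging into Theorem~\ref{thm:jisa_L2}'s closed form and then post-processing via $\wb{\mW}_\ell = \mP_\ell\wt{\mV}_\ell^{-1}\wb{\mU}_\ell$ yields the final expression after two simplifications. First, for $\ell=2$, the identity $\wt{\mV}_2^{-1}\mQ^\H = \mQ^{-1}\mQ^{-\H}\mQ^\H = \mQ^{-1}$ directly collapses the prefactor. Second, for $\ell=1$, the eigenrelation $\mQ\wt{\mV}_1^{-1}\mQ^\H\hat{\vu}_i = \hat{\lambda}_i^{-1}\hat{\vu}_i$ implies $\wt{\mV}_1^{-1}\mQ^\H\hat{\vu}_i = \hat{\lambda}_i^{-1}\mQ^{-1}\hat{\vu}_i$, which simultaneously turns the prefactor into $\mQ^{-1}$ and introduces an extra $\diag(\hat{\lambda}_i)^{-1}$ factor; combined with the inherited $\diag(\hat{\lambda}_i)^{+1/2}$, this flips the exponent and produces exactly $\diag(\hat{\lambda}_1,\ldots,\hat{\lambda}_{d_1})^{-1/2}$. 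Any residual column permutations and the unitary non-uniqueness of the square root $\mQ$ are absorbed into the arbitrary Hermitian matrices $\mR_1,\mR_2$.

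The main obstacle is careful bookkeeping: tracking conjugate transposes, inverses, and the reversal of eigenvalue ordering under the substitution $\mQ_\star = \mQ^{-\H}$. A single sign- or transpose-error in the correspondence $\vu_i \leftrightarrow \hat{\vu}_{M+1-i}$ would misplace columns or invert the exponent of the $\diag(\hat{\lambda}_i)$ scaling. Once the two algebraic identities above are identified, however, the remainder reduces to routine linear algebra.
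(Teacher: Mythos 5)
Your proof is correct and follows exactly the route the paper intends: its own proof is a single sentence telling the reader to plug the solution of Theorem~\ref{thm:jisa_L2} for the pair $\wt{\mV}_1^{-1},\wt{\mV}_2^{-1}$ into Theorem~\ref{thm:jisamm_three_plus} and use the reciprocal-eigenvalue correspondence, and you have simply supplied the square-root choice $\mQ_\star=\mQ^{-\H}$, the order reversal, and the exponent flip that make that sentence work. One remark: your chain correctly yields the prefactor $\mP_\ell\mQ^{-1}$ rather than the bare $\mQ^{-1}$ printed in the corollary --- the $\mP_\ell$ is required for the dimensions to match when $L>2$, so this is an omission in the paper's statement rather than an error in your argument.
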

\begin{proof}
  The proof follows from plugging the solutions given by Theorem~\ref{thm:jisa_L2} for $\wt{\mV}_1^{-1},\wt{\mV}_2^{-1}$ into Theorem~\ref{thm:jisamm_three_plus} and using the properties of eigenvalues of inverse matrices.
\end{proof}
Interestingly, this results in formulas similar to those of Theorem~\ref{thm:jisa_L2}, but with the order of the eigenvalues reversed.

\section{Blind Single and Multi-Source Extraction}
\seclabel{overiva}

We are now ready to apply the JISA-MM framework to BSE and BMSE.
We consider the overdetermined IVA model proposed in~\cite{Scheibler:2019vx}, which is an extension of independent vector extraction (IVE)~\cite{Koldovsky:fn} to multiple sources.
The sensor signals $\vx_{fn}$ of mixture $f$ at time $n$ is modelled as
\begin{equation}
  \vx_{fn} = \mA_f \vs_{fn} + \mPsi_f \vz_{fn} + \vb_{fn},
\end{equation}
where $\vs_{fn} = [s_{1fn},\ldots,s_{Kfn}]^\top \in \C^K$ contains the source signals, $\vz_{fn} \in \C^{M-K}$ is a vector of coherent noise sources, and $\mA_f \in \C^{M\times K}$ and $\mPsi_f \in \C^{M \times M-K}$ are their respective mixing matrices.
The term $\vb_{fn}\in\C^M$ is an uncorrelated noise vector.
Because the background can be considered as a subspace of sources that do not need to be separated, we can apply the JISA-MM framework developed in the previous section.
We use $K + 1$ subspaces.
Each of the first $K$ contains exactly one source, while the $(K+1)$th contains all the remaining background components, i.e.,
\begin{align}
  \calI_g & = \{g\}, & g&=1,\ldots,K, \\
  \calI_{K+1} & = \{K+1,\ldots, M\}. & &
\end{align}
Now, we would like to ensure that the target sources and background components are correctly attributed to the subspaces.
This is achieved here by using different probabilistic models to sources and background components.
Target sources are non-Gaussian, and the background is described by a time-invariant Gaussian distribution.
The intuition is that background components, being mixtures of many components, will tend to be Gaussian by virtue of the central limit theorem.
In this work, we use the super-Gaussian model of Hypothesis~\ref{hyp:supergauss} for target sources.
We formalize the background model in the following hypothesis.
\begin{hypothesis}[Distribution of Noise Sources]
  \label{hyp:background}
  The separated background noise vectors have a \textit{time-invariant} complex Gaussian
  distribution across sensors
  \begin{equation}
    p_{\vz_f}(\vz_{fn}) = \frac{1}{\pi^{M-K} |det(\mB_{f})|} e^{-\vz_{fn}^H (\mB_{f})^{-1} \vz_{fn}}
    \elabel{dist_noise}
  \end{equation}
  where $\mB_{f}$ is the (unknown) spatial covariance matrix of the noise (after separation).
  Moreover, the separated background noise is statistically independent across frequencies.
\end{hypothesis}
Based on this model, we want to recover the demixing matrices $\mW_f$, $f=1,\ldots,F$,
\begin{equation}
  \mW_f = \begin{bmatrix} \vw_{1f} & \cdots & \vw_{Kf} & \mU_f \end{bmatrix}^\H,
  \elabel{overiva_demix_struct}
\end{equation}
where $\vw_{kf}$ are demixing vectors as described in \sref{background_iva} and $\mU_f$ is the sub-demixing matrix extracting the background.
We are now ready to write the cost function corresponding to Hypotheses~\ref{hyp:independence},~\ref{hyp:supergauss}, and~\ref{hyp:background},
\begin{multline}
  \calO(\calW\,;\,\calX) = \sum_{kn} G(\| \check{\vy}_{kn} \|) + \sum_{fn} \vx_{fn}^\H \mU_f \mB^{-1}_f \mU_f^\H \vx_{fn} \\
  -2N\sum_f \log|\det(\mW_f)| + \text{constant}.
  \elabel{cost_overiva}
\end{multline}
Using the inequality of Lemma~\ref{lem:supergauss}, we obtain the following surrogate function
\begin{multline}
  \calO_2(\calW\,;\,\calX) = -2N\sum_f \log|\det(\mW_f)| \\
 + N \sum_{kf} \vw_{kf} \mV_{kf} \vw_{kf} + N \sum_{f} \tr\left(\mU_f^\H \mC_f \mU_f \mB^{-1}_f\right) \\
 + \text{constant},
  \elabel{cost_aux_overiva}
\end{multline}
such that $\calO(\calW\,;\,\calX) \leq \calO_2(\calW\,;\,\calX)$.
Using $\calO_2$, we can build several MM algorithms, that we describe in the rest of this section.
It is clear that conventional IP and IP2 rules can be applied to update the demixing vectors one-by-one and two-by-two, respectively.
The background subspace can be updated also individually by using Theorem~\ref{thm:jisa_up1}.
In addition, it is possible to update jointly one demixing vector and the background using Theorem~\ref{thm:jisamm_three_plus}.
This last scheme leads to the particularly efficient algorithm FIVE in the single target source case~\cite{Scheibler:2019tt}.

\subsection{Estimation of the Covariance of the Background}

The model proposed in Hypothesis~\ref{hyp:background} posits a covariance matrix $\mB_f$ for the background that is often unknown in practice.
While we omit the details here, all other parameters being fixed, the maximum likelihood estimator or the covariance matrix is
\begin{equation}
  \wh{\mB}_f = \mU_f^\H \mC_f \mU_f.
  \elabel{noise_covmat_est}
\end{equation}
Now recall that all of the updates presented in \sref{jisa_mm} allow to fit arbitrary covariance structures to the subspaces.
This means that we do not really need to know $\mB_f$ in advance.
We can set it arbitrarily, e.g., $\mB_f = \mI$, or to a value that saves computations.

\if0
Now, if we fix $\mB_f = \wh{\mB}_f$, regardless of the choice of $\mU_f$, we have
\begin{equation}
  \sum_n \vz_{fn}^H \mB_f^{-1} \vz_{fn} = \tr\left( \mB_f^{-1} \mZ_f\mZ_f^H \right) = N(M-K).
\end{equation}
As a consequence, once $\mB_f$ has been fixed, the background part of the cost function can be ignored for the estimation of $\wh{\mW}_f$.
\fi

\subsection{Background Update with Parametrized Demixing Matrix}
\seclabel{param_bg_update}

Let us consider again the problem at hand.
Our objective is to estimate the demixing matrix $\mW_f$ such that the source vector $\vs_{fn}$ is recovered from the measurements
\begin{equation}
    \begin{bmatrix} \vs_{fn} \\ \mPhi_f \vz_{fn} \end{bmatrix} = \wh{\mW}_f \vx_{fn}.
\end{equation}
The matrix $\mPhi_f$ is an arbitrary invertible linear transformation reflecting that we do not aim at separating the background noise components.
Indeed, we may even choose $\mPhi_f$ to simplify the task at hand.
Namely, we choose it so that
\begin{equation}
  \mU_f = \begin{bmatrix} \mJ_f \\ -\mI_{M-K} \end{bmatrix}.
  \elabel{background_param}
\end{equation}
with $\mJ_f\in\C^{M-K\times K}$.
With a slight abuse of notation, we let $\vz_{fn} = \mU_f \vx_{fn}$.
In previous work, an orthogonal constraint has been introduced with little justification to help with the estimation of $\mU_f$~\cite{Koldovsky:fn,Scheibler:2019vx}, namely,
\begin{equation}
  \Expect{\vz_{fn} \vs_{fn}^\H} = \mU_f^\H \mC_f \wt{\mW}_f = \vzero.
  \elabel{orth_const}
\end{equation}
where $\wt{\mW}_f = \begin{bmatrix} \vw_{1f} & \cdots & \vw_{Kf} \end{bmatrix}$.
In light of the development in the previous section, it is now clear that~\eref{orth_const} is a necessary condition for the optimality of $\mU_f$ as stated in Proposition~\ref{prop:head_jisa}.
In fact, \eref{orth_const} appears in the update rules for one sub-demixing matrix as \eref{up_1_sub_demix_null}.
This suggests the following procedure to update $\mU_f$.
For fixed $\wt{\mW}_f$, we can solve \eref{orth_const} for $\mJ_f$ and obtain
\begin{equation}
  \mJ_f = \left( \mE_2 \mC_f \wt{\mW}_f^H \right) \left( \mE_1 \mC_f \wt{\mW}_f^H \right)^{-1},
  \elabel{J_up}
\end{equation}
where $\mE_1 = [\mI_K\ \vzero_{K\times M-K}]$ and $\mE_2 = [\vzero_{M-K\times K}\ \mI_{M-K}]$.
The complexity of this update is dominated by the inversion of a $K\times K$ matrix.
It is thus relatively cheap and can be done after every update of a demixing filter.

Finally, note that $\mU_f$ is fully determined by the current value of the demixing filters.
As a consequence, so is the sample covariance of the background, i.e., $\mB_f$, which is given by~\eref{noise_covmat_est}.

\subsection{Background Update without Parametrization}
\seclabel{non_param_bg_update}

Rather than using the above parametrization, one could apply the update of Thoerem~\ref{thm:jisa_up1} to the background matrix.
Unlike the parametrized case from previous section, \eref{up_1_sub_demix_null} does not fully specify $\mU_f$.
It is possible to use the extra degrees of freedom to fit a covariance structure by applying~\eref{up_1_sub_demix_norm}, because ultimately it does not matter, we would rather omit this step.
However, completely omitting this step would leave the scale of $\mU_f$ ambiguous and might lead to numerical problems in the algorithm.
Instead, we propose to only constrain the diagonal elements of $\mB_f$ to be one.
This can be achieved by normalizing the columns of $\mU_f$,
\begin{equation}
  \vu_{kf} \gets \frac{\vu_{kf}}{\sqrt{\vu_{kf}^\H \mC_f \vu_{kf}}}, \quad k=1, \ldots, M-K,
\end{equation}
where $\vu_{kf}$ is the $k$th column of $\mU_f$.
Now this update is more computationally demanding than with the parametrized matrix.
It requires the inversion of an $M\times M$ matrix.

\subsection{Joint Update of One Demixing Vector and Background}

Rather than having alternate updates for demixing vector and background, it is appealing to have a joint update.
This is achieved by minimizing the surrogate function~\eref{cost_aux_overiva} jointly for the two.
The rules to update the vectors can be directly adapted from \sref{jisa_update_2sub}.
Here, the covariance matrix $\mB_f$ can be chosen according to either of the strategies presented in the last two sections.
However, eigenvalue solvers often yield solutions that satisfy $\mB_f=\mI$, such that no extra processing is required.

When there is only a single target source, then the special case of two subspaces described in \sref{two_subspaces} applies.
This leads to the so-called FIVE algorithm~\cite{Scheibler:2019tt}.
This algorithm applies the updates of Theorem~\ref{thm:jisa_L2} to matrices $\mV_{1f}$ and $\mC_f$.
Because $\mC_f$ is never updated a few tricks can be used to lower the computation and memory requirements.
For example, the background matrix is never needed and does not need to be stored.
In addition, the algorithms has an intuitive interpretation as beamforming iteratively maximizing the SINR.
It is blazingly fast and converges in just a few iterations as demonstrated in \sref{perfeval}.

\if0
\begin{algorithm}[t]
\SetKwInOut{Input}{Input}\SetKwInOut{Output}{Output}
\Input{Microphones signals $\{ \vx_{fn} \}$, \# sources $K$}
\Output{Separated signals $\{ \vs_{fn} \}$ }
\DontPrintSemicolon
$\vs_{fn} \gets \vx_{fn},\ \forall f,n$\;
$\mW_f \gets [ \mI_M\ \vzero_{K\times M-K} ],\ \forall f$\;
$\mJ_f \gets \vzero_{M-K\times K},\ \forall f$\;
\For{loop $\leftarrow 1$ \KwTo $\text{max. iterations}$}{
  \For{$k \leftarrow 1$ \KwTo $\text{K}$}{
    $r_{kn} \gets \frac{1}{F} \sum_f |s_{kfn}|^2,\ \forall n$\;
    \For{$f \gets 1$ \KwTo $F$}{
      $\mJ_f \gets \left( \mE_2 \mC_f \mW_f^H \right) \left( \mE_1 \mC_f \mW_f^H \right)^{-1}$\;
      $\mV_{kf} \gets \frac{1}{N} \sum_n \frac{1}{r_{kn}} \vx_{fn} \vx_{fn}^H$\;
      $\vw_{kf} \gets (\wh{\mW}_f \mV_{kf})^{-1} \ve_k$\;
      $\vw_{kf} \gets \vw_{kf} \left(\vw_{kf}^H \mV_{kf} \vw_{kf}\right)^{-\frac{1}{2}}$\;
      $s_{kfn} \gets \vw_{kf}^H \vx_{fn},\ \forall n$\;
    }
  }
}
\vspace{0.5cm}
\caption{OverIVA}
\label{alg:oiva}
\end{algorithm}
\fi

\begin{algorithm}[t]
\SetKwInOut{Input}{Input}\SetKwInOut{Output}{Output}
\SetKw{KwBy}{by}
\Input{Microphones signals $\vx_{fn}\in\C^M$, $\forall f,n$}
\Output{Separated signals $\vy_{fn}\in\C^K$, $\forall f,n, K < M$}
\DontPrintSemicolon
$\mW_f \gets \mI_K,\ \forall f$\;
$\vy_{fn} \gets \vx_{fn},\ \forall f, n$\;
$\mC_f \gets \frac{1}{N}\sum_f \vx_{fn} \vx_{fn}^\H$\;
\For{loop $\leftarrow 1$ \KwTo $\text{max. iterations}$ \KwBy $2$}{
  \For{$k \leftarrow 1$ \KwTo $2K$ \KwBy $2$}{
    $r_{qn} \gets \frac{1}{F} \sum_f |y_{kfn}|^2,\ \forall n, \forall q=k, k+1$\;
    \For{$f \gets 1$ \KwTo $F$}{
      \nl \# Background update, $\mU_f$ as in \eref{overiva_demix_struct}
      $\mU_f \gets (\mW_f \mC_f)^{1} [\ve_{K+1}\,\cdots\, \ve_M]$\;
      \For{$k\gets1$ \KwTo $M-K$}{
        $\vu_{kf} \gets \frac{\vu_{kf}}{\sqrt{\vu_{kf}^\H \mC_f \vu_{kf}^\H}}$\;
      }
      \nl \# Pairwise source update\;
      \For{$q \gets k, k+1$}{
        $\mV_{qf} \gets \frac{1}{N} \sum_n \frac{1}{r_{kn}} \vx_{fn} \vx_{fn}^H$\;
        $\mP_{qf} \gets (\mW_f \mV_{qf})^{-1} [ \ve_k \, \ve_{k+1} ]$\;
        $\wt{\mV}_{qf} \gets \mP_{qf}^\H \mV_{qf} \mP_{qf}$\;
      }
      Let $\vh_k, \vh_{k+1}$ and $\lambda_k \geq \lambda_{k+1}$ be the two eigenvectors and values, respectively, of $\wt{\mV}_{(k+1)f}^{-1}\wt{\mV}_{kf}$\;
      \For{$q \gets k, k+1$}{
        $\vw_{qf} \gets \frac{\mP_{qf} \vh_q}{\sqrt{\vh_q^\H \wt{\mV}_{qf} \vh_q}}$\;
        $y_{qfn} \gets \vw_{qf} \vx_{fn},\ \forall n$\;
      }
    }
  }
}
\vspace{0.5cm}
\caption{OverIVA-IP2-NP: OverIVA with iterative projection 2 and non-parametric background update}
\label{alg:overiva-ip2-np}
\end{algorithm}

\begin{algorithm}[t]
\SetKwInOut{Input}{Input}\SetKwInOut{Output}{Output}
\SetKw{KwBy}{by}
\Input{Microphones signals $\vx_{fn}\in\C^M$, $\forall f,n$}
\Output{Separated signals $\vy_{fn}\in\C^K$, $\forall f,n, K < M$}
\DontPrintSemicolon
$\mW_f \gets \mI_K,\ \forall f$\;
$\vy_{fn} \gets \vx_{fn},\ \forall f, n$\;
$\mC_f \gets \frac{1}{N}\sum_f \vx_{fn} \vx_{fn}^\H$\;
\For{loop $\leftarrow 1$ \KwTo $\text{max. iterations}$}{
  \For{$k \leftarrow 1$ \KwTo $K$}{
    $r_{kn} \gets \frac{1}{F} \sum_f |y_{kfn}|^2,\ \forall n$\;
    \For{$f \gets 1$ \KwTo $F$}{
      $\mV_{kf} \gets \frac{1}{N} \sum_n \frac{1}{r_{kn}} \vx_{fn} \vx_{fn}^H$\;
      $\mP_{kf} \gets (\mW_f \mV_{kf})^{-1} [ \ve_k \, \ve_{K+1}\,\cdots\,\ve_M ]$\;
      $\mR_{f} \gets (\mW_f \mC_{f})^{-1} [ \ve_k \, \ve_{K+1}\,\cdots\,\ve_M ]$\;
      $\wt{\mV}_{kf} \gets \mP_{kf}^\H \mV_{kf} \mP_{kf}$\;
      $\wt{\mC}_f \gets \mR_{f}^\H \mC_f \mR_f$\;
      Let $\vh_1, \ldots, \vh_{M-K+1}$ and $\lambda_1 \geq \ldots \geq \lambda_{M-K+1}$ be the eigenvectors and values, respectively, of $\wt{\mC}_{f}^{-1}\wt{\mV}_{kf}$\;
      $\vw_{kf} \gets \frac{\mP_{kf} \vh_1}{\sqrt{\vh_1^\H \wt{\mV}_{kf} \vh_1}}$\;
      \For{$q \gets 1$ \KwTo $M-K$}{
        $\vw_{(K+q)f} \gets \frac{\mR_f \vh_{q+1}}{ \sqrt{\vh_{q+1}^\H \wt{\mC}_f \vh_{q+1}}}$\;
      }
      $y_{kfn} \gets \vw_{kf} \vx_{fn},\ \forall n$\;
    }
  }
}
\vspace{0.5cm}
\caption{OverIVA-DX/BG: OverIVA with joint updates of one demixing vector and background}
\label{alg:overiva-dx/bg}
\end{algorithm}

\subsection{Computational Complexity}
\seclabel{comp}

For all algorithms described in this section, when the number of time frames $N$ is larger than the number of microphones $M$, the runtime is dominated by the computation of the weighted covariance matrix $\mV_{kf}$.
The computational complexity in that case is $\calO(K F M^2 N)$. When the number of microphones is larger, the bottleneck is either a matrix inversion or an eigenvalue problem.
In both cases, the complexity is $\calO(K F M^3)$.
The total complexity of the algorithms is thus
\begin{equation}
  \calC_{\text{OverIVA}} = \calO(KF M^2 \max\{M, N\}).
\end{equation}
The leading $K$ comes from the number of demixing filters (one per source), and $F$ is the number of frequency bins.
In contrast, conventional AuxIVA needs to update all $M$ demixing filters, which leads to complexity
\begin{equation}
  \calC_{\text{AuxIVA}} = \calO(F M^3 \max\{M, N\}).
\end{equation}
The overall complexity is thus reduced by a factor $K/M$.
This is significant in many practical cases as the number of target sources is rarely larger than three, and the number of microphones can easily be over ten for larger arrays.

\begin{figure}
\begin{center}
  \includegraphics{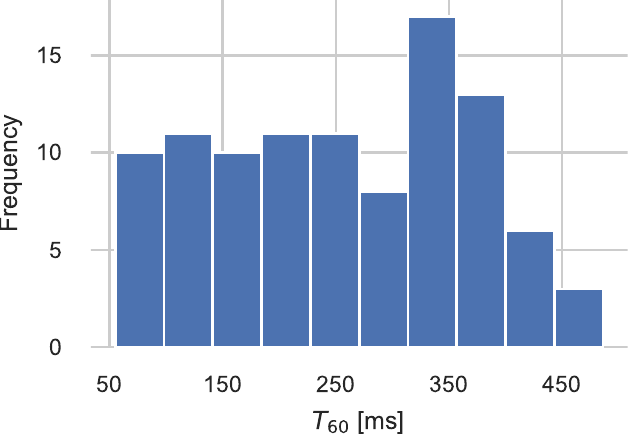}
\end{center}
\caption{Histogram of the $T_{60}$ of the 100 rooms in experiment 1.}
\flabel{experiment_rt60}
\end{figure}

\subsection{Discussion}
\seclabel{discussion}

A few points are in order. We assume the covariance matrix of the noise is rank $M-K$. In practice, this means that we will not be able to remove noise that has the same steering vector as one of the sources.
Independence of noise across frequencies is a simplifying assumption and is typically not fulfilled.
We confirm in the experiment of \sref{perfeval} that this does not seem to be a problem.

One can also wonder how the algorithm can tell apart sources from noise.
While we do not offer a precise analysis, we conjecture that the $K$ strongest sources have a very non-Gaussian distribution.
On the contrary, the mix of the noise and remaining weaker sources will have a distribution closer to Gaussian.
As such, we expect the maximum likelihood to choose the strongest sources automatically.

\begin{table*}
  \centering
  \begin{tabular}{@{}lcccccl@{}}
    \toprule
    \textbf{Label} & \multicolumn{2}{c}{\textbf{Target}} & \textbf{Cost function} & \multicolumn{2}{c}{\textbf{Updates}} & \textbf{Note} \\
    \midrule
    & {Single} & {Multi} & & {Targets} & {Background} & \\
    \cmidrule{2-3}\cmidrule{5-6}
    OverIVA-IP~\cite{Scheibler:2019vx} & \cmark & \cmark & \eref{cost_overiva} & {\hrulefill~IP~\hrulefill} & {\hrulefill~Parametric~\hrulefill} & \\
    OverIVA-IP2  & \xmark & \cmark &  \eref{cost_overiva} & {\hrulefill~IP2~\hrulefill}  & {\hrulefill~Parametric~\hrulefill} & $\star$New$\star$ \\
    OverIVA-IP-NP & \cmark & \cmark & \eref{cost_overiva} & {\hrulefill~IP~\hrulefill} & {\hrulefill~Non-parametric~\hrulefill} & $\star$New$\star$ \\
    OverIVA-IP2-NP & \xmark & \cmark &  \eref{cost_overiva} & {\hrulefill~IP2~\hrulefill}  & {\hrulefill~Non-parametric~\hrulefill} & $\star$New$\star$ \\
    OverIVA-Demix/BG & \cmark & \cmark & \eref{cost_overiva} & \multicolumn{2}{c}{\hrulefill~Joint~\hrulefill} & $\star$New$\star$ \\
    FIVE~\cite{Scheibler:2019tt} & \cmark & \xmark & \eref{cost_overiva} & \multicolumn{2}{c}{\hrulefill~Joint~\hrulefill} & \\
    OGIVEs~\cite{Koldovsky:fn} & \cmark & \xmark & \eref{cost_overiva} & {\hrulefill~Gradient~\hrulefill} & {\hrulefill~Parametric~\hrulefill} & \\
    AuxIVA-IP~\cite{Ono:2011tn}  & \cmark & \cmark & \eref{cost_iva} & {\hrulefill~IP~\hrulefill} & $\varnothing$ & Outputs $K$ largest sources \\
    AuxIVA-IP2~\cite{Ono:2020iva} & \cmark & \cmark &  \eref{cost_iva} & {\hrulefill~IP2~\hrulefill}  & $\varnothing$ & Outputs $K$ largest sources \\
    \bottomrule
  \end{tabular}
  \caption{Summary of the algorithms compared in the experiments}
  \tlabel{algo_ref}
\end{table*}

\section{Performance Evaluation}
\seclabel{perfeval}

The performance evaluation aims at assessing the following properties of the algorithms
\begin{itemize}
  \item Separation performance for speech signals
  \item Runtime characteristics
  \item Success rate of the separation under different conditions
\end{itemize}
The evaluation is done through numerical experiments.
We compare the following algorithms.
\begin{enumerate}
  \item \textbf{OverIVA-IP}~\cite{Scheibler:2019vx}:
    Sources are updated one-by-one with the IP rules~\cite{Ono:2011tn}.  
    The parametric background update from \sref{param_bg_update} (equation~\eref{J_up}) is applied before each source update.
  \item \textbf{OverIVA-IP2 (new)}:
    Sources are updated two-by-two with the IP2 rules~\cite{Ono:2020iva}.  
    The parametric background update from \sref{param_bg_update} (equation~\eref{J_up}) is applied before each source update.
  \item \textbf{OverIVA-IP-NP (new)}:
    Sources are updated one-by-one with the IP rules~\cite{Ono:2011tn}.  
    The non-parametric background update from \sref{non_param_bg_update} is applied before each source update.
  \item \textbf{OverIVA-IP2-NP (new)}:
    Sources are updated two-by-two with the IP2 rules~\cite{Ono:2020iva}.  
    The non-parametric background update from \sref{non_param_bg_update} is applied before each source update.
    See \algref{overiva-ip2-np}.
  \item \textbf{OverIVA-DX/BG (new)}:
    The sources are updated one-by-one, but always jointly with the background by applying Theorem~\ref{thm:jisamm_three_plus}.
    See \algref{overiva-dx/bg}.
  \item \textbf{FIVE}~\cite{Scheibler:2019tt}: The algorithm is the same as OverIVA-DX/BG, but specialized for a single source. In that case, Theorem~\ref{thm:jisa_L2} applies.
  \item \textbf{OGIVEs}~\cite{Koldovsky:fn}: The gradient-based OGIVE algorithm with the switching criterion.
  \item \textbf{AuxIVA-IP}~\cite{Ono:2011tn}: After full IVA with IP, the $K$ strongest sources are selected at the output.
  \item \textbf{AuxIVA-IP2}~\cite{Ono:2020iva}: After full IVA with IP2, the $K$ strongest sources are selected at the output.
\end{enumerate}
Note that not all algorithms apply to all cases.
FIVE and OGIVEs only apply in the single source case.
OverIVA-IP2 and OverIVA-IP2-NP always extract at least two sources.
The algorithms and their properties are summarized in \tref{algo_ref}.
All use the non-linearity corresponding to the Laplace distribution, namely $\varphi(r) = \frac{1}{2r}$.

The algorithms operate in the time-frequency domain and are preceded by a 4096 points STFT with $3/4$-overlap.
We use a Hamming window and the matching synthesis window for optimal reconstruction.
Speech samples of approximately \SI{20}{\second} are created by concatenating utterances from the CMU Sphinx database \cite{Kominek:2004vf}.
The simulation is conducted at a sampling frequency of \SI{16}{\kilo\hertz}.
The reverberation is simulated using the \texttt{pyroomacoustics} Python package~\cite{Scheibler:2018di}.

We found that pre-whitening the signals by applying a principal component analysis (PCA) prior to separation is always beneficial.
The input to all algorithms is thus the output of the PCA.
The scale of the separated signals is restored by minimizing the distortion with respect to the first microphone~\cite{Matsuoka:2002da}.

For the evaluation, we use the SI-SDR and SI-SIR~\cite{LeRoux:2018tq}.
The SI-SDR measures how different the extracted signal is from the groundtruth.
The SI-SIR measures how much of the other sources and noise is left in the extracted signal.
We also measure the success rate of the different algorithms that we define as the probability that the output SI-SIR is larger than~\SI{0}{\decibel}.
That is, the power of the target signal in the output is larger than the combined power of the rest.

\begin{figure*}
  \centering
  \includegraphics{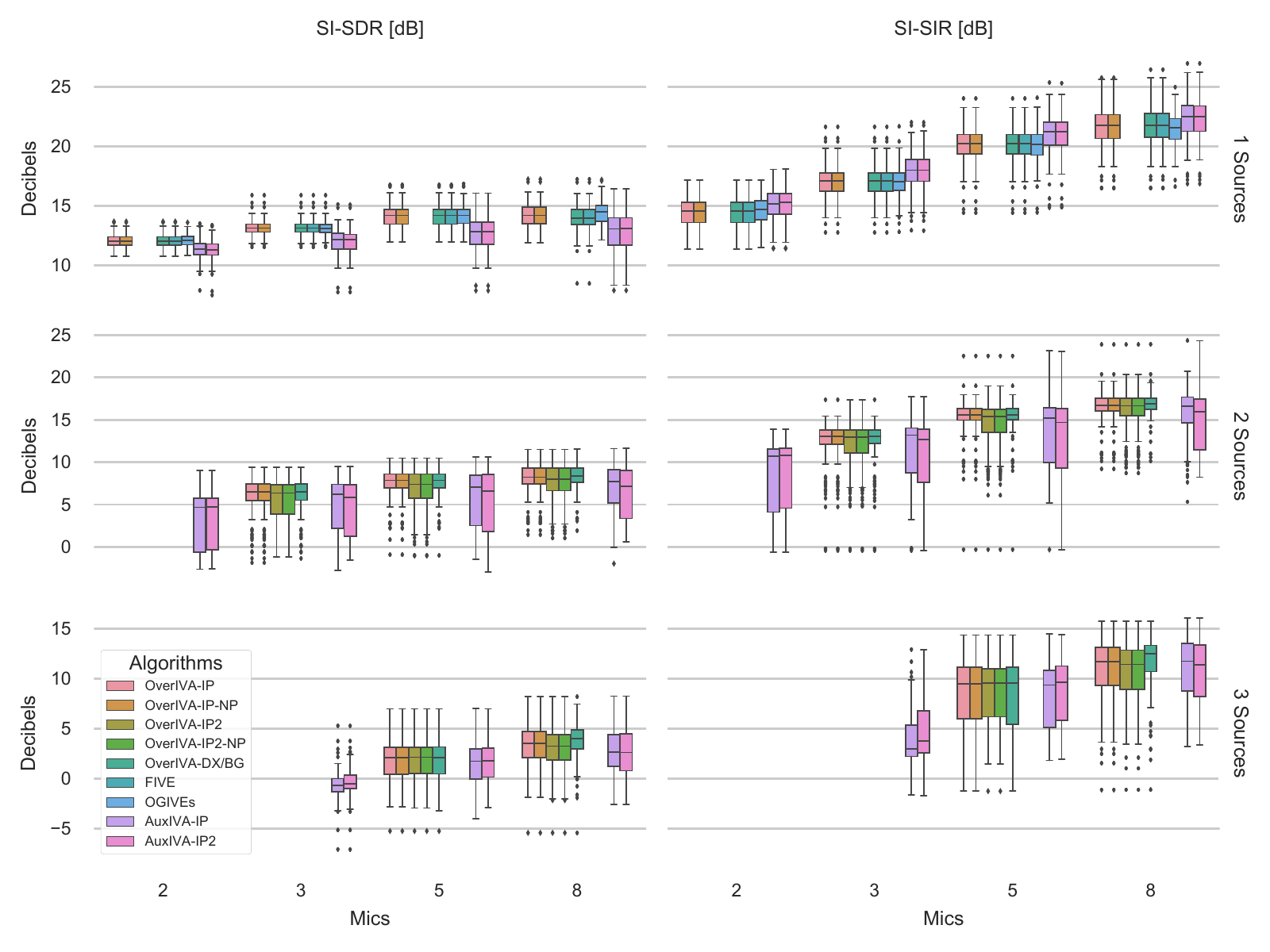}
  \caption{
    Box-plots of SI-SDR (left) and SI-SIR (right) after processing for the different algorithms.
    The number of sources increases from 1 to 3 from top to bottom.
    The number of microphones increases with the horizontal axis.
  }
  \flabel{sep_perf}
\end{figure*}

\subsection{Separation and Runtime Performance}

\subsubsection{Setup}

We simulate 100 random rectangular rooms with walls between \SI{6}{\meter} and \SI{10}{\meter} and ceiling from \SI{2.8}{\meter} to \SI{4.5}{\meter} high.
Simulated reverberation times ($T_{60}$) range from \SI{60}{\milli\second} to \SI{450}{\milli\second}.
See \ffref{experiment_rt60} for a histogram of the $T_{60}$.
Sources and microphone array are placed at random at least \SI{50}{\centi\meter} away from the walls and between \SI{1}{\meter} and \SI{2}{\meter} high.
The array is circular and regular with 2, 3, 5, or 8 microphones, and radius such that neighboring elements are \SI{2}{\centi\meter} apart.
All sources are placed further from the array than the critical distance of the room --- the distance where direct sound and reverberation have equal energy.
It is computed as $d_{\text{crit}} = 0.057\sqrt{V/T_{60}}\,\si{\meter}$, with $V$ the volume of the room~\cite{Kuttruff:2009uq}.
We let the target source be the one closest to the array, in the interval $[d_{\text{crit}}, d_{\text{crit}}+1]$.
The $Q=10$ interferers are at least $(d_{\text{crit}} + 1)\,\si{\meter}$ from the array.
We define 
\begin{equation}
  \SINR = \frac{K \sigma_T^2}{Q \sigma_I^2 + \sigma_w^2},
  \elabel{sinr}
\end{equation}
where $\sigma_T^2$ and $\sigma_I^2$ are the variance of target and interferers, respectively, at the first microphone.
The uncorrelated noise variance $\sigma_w^2$ is set to be \SI{1}{\percent} of the total noise-and-interference.
In this experiment, we fix $\SINR=$~\SI{10}{\decibel}.
MM algorithms are run for 100 iterations.
OGIVEs is run for 2000 iterations with step size of 0.1.

\begin{figure*}
  \centering
  \begin{subfigure}{\textwidth}
    \centering
    \includegraphics{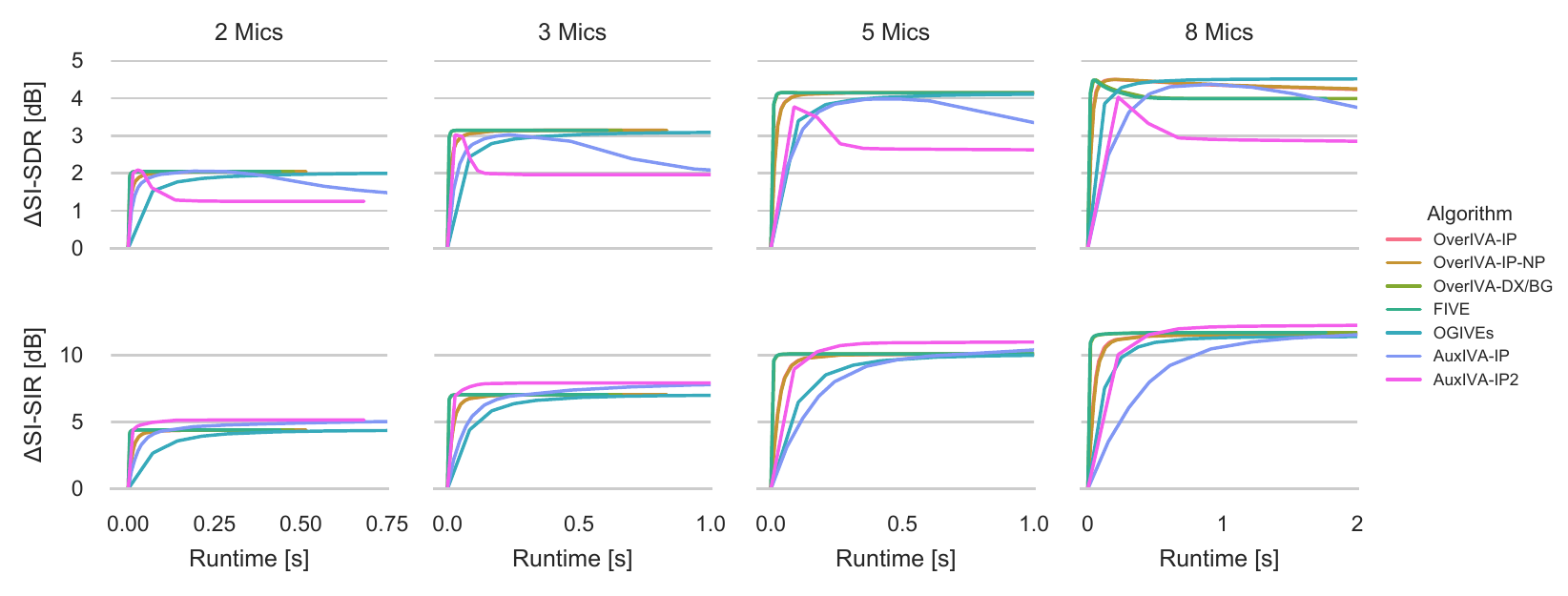}
    \caption{One target source}
  \end{subfigure}
  \begin{subfigure}{\textwidth}
    \centering
    \includegraphics{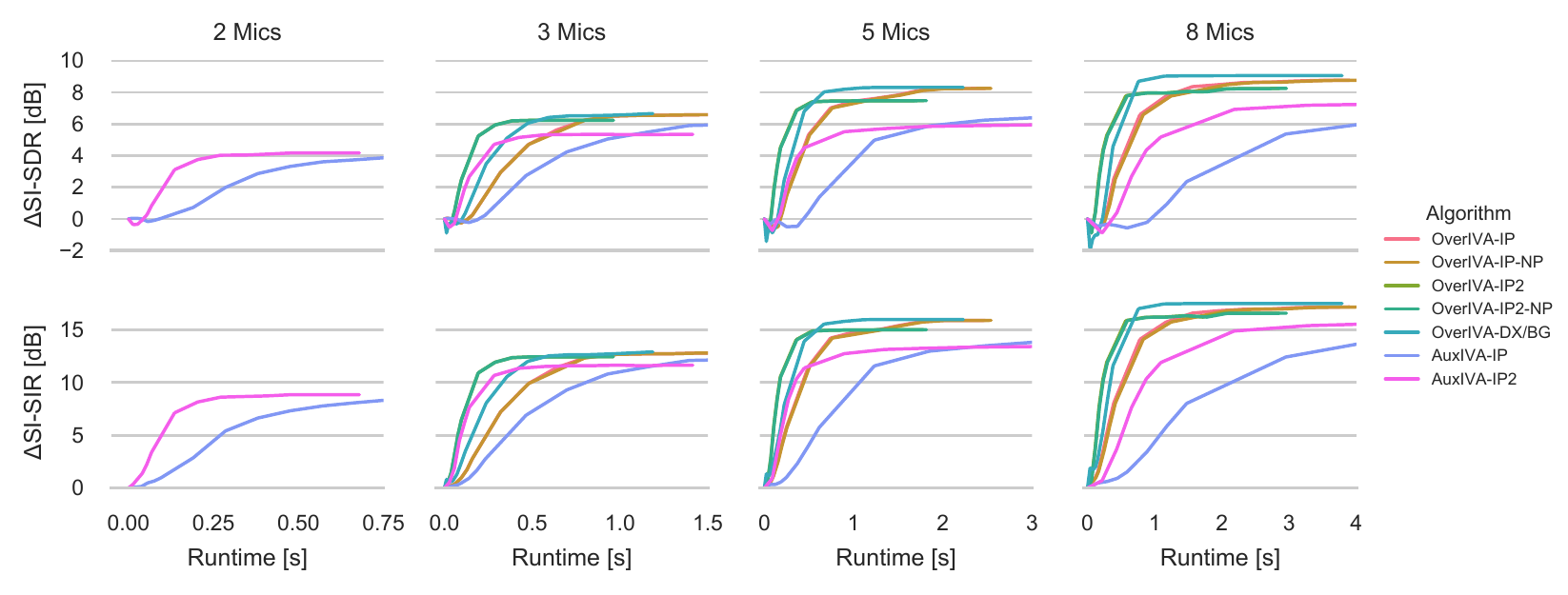}
    \caption{Two target sources}
  \end{subfigure}
  \begin{subfigure}{\textwidth}
    \centering
    \includegraphics{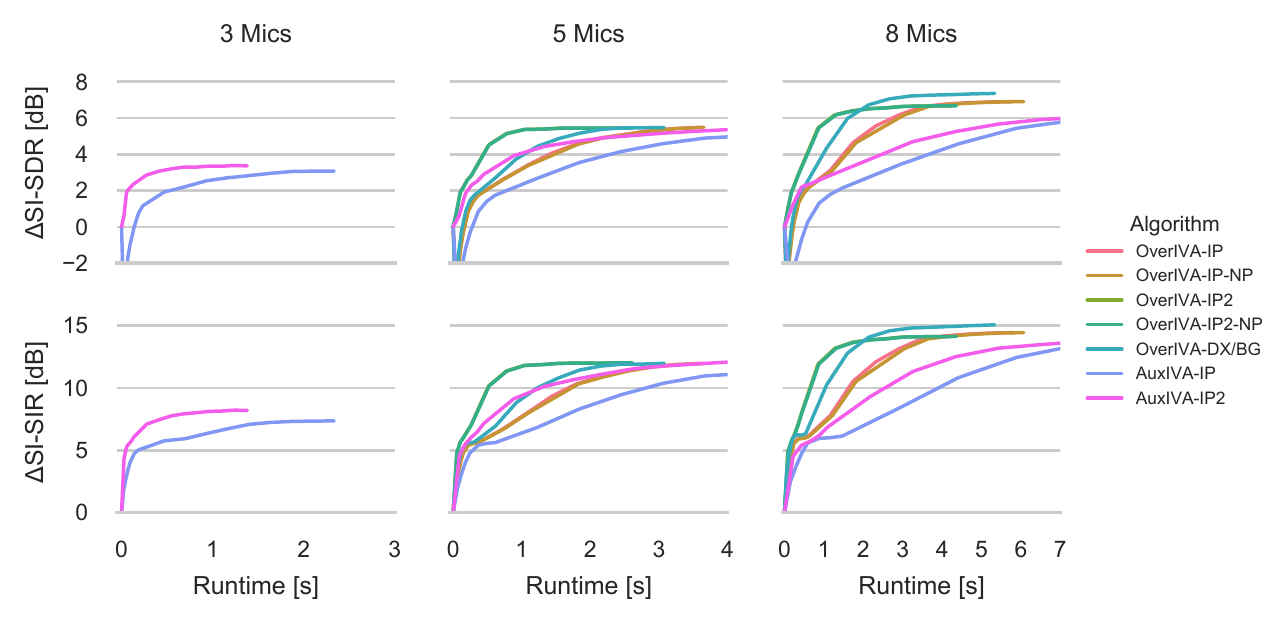}
    \caption{Three target sources}
  \end{subfigure}
  \caption{The SI-SDR and SI-SIR improvement as a function of wall-clock time for different algorithms and for (a) single, (b) two, and (c) three target sources. From left to right we use 2, 3, 5, and 8 microphones. In each sub-graph, the top and bottom row show SDR and SIR, respectively.}
  \flabel{runtime_perf}
\end{figure*}

\subsubsection{Separation Performance}

\ffref{sep_perf} shows box-plots of the SI-SDR and SI-SIR after the processing.
The end of the whiskers are placed at the largest sample smaller than $1.5\times$ the inter-quantile range.
Dots are samples outside this range.

First, we can evaluate the performance of the determined case in this experiment.
For two sources/two microphones and three sources/three microphones, we only evaluated AuxIVA-IP and AuxIVA-IP2 as the number of dimensions is not enough to include the background model.
We observe that there is little difference in the final values of both methods.
In contrast, we observe that, overall, adding more microphones steadily improves the separation performance, regardless of the algorithm used.
This makes a strong case for working in the overdetermined regime.
The improvement is especially visible in the SI-SIR, meaning the sources are better separated with more microphones.

For a single source, AuxIVA-IP/IP2 has lower SI-SDR, but higher SI-SIR than overdetermined methods.
We conjecture this might be due to the better match of the Gaussian background in this experiment.
For two and three sources, overdetermined methods perform better overall and in all cases they have smaller variance.
While all overdetermined methods perform fairly close to each others, we will note two things.
First, for two sources, the OverIVA-IP2 and OverIVA-IP2-NP have a slightly larger variance than the others.
As shown, in convergence results we show next, these methods are very fast.
However, it seems they might get stuck in local minimas in some cases.
Second, we notice that OverIVA-DX/BG performs better than the other methods by a small but noticeable margin for three sources and 8 microphones.
This might make it the best method when the number of microphones is large.

Finally, there is no difference of performance between the parametric and non-parametric background updates.
Thus, the parametric updates should be preferred as they are more economical in computation and memory.

\subsubsection{Convergence and Runtime Performance}

The average runtime of each algorithm normalized for \SI{1}{\second} of audio input was measured and divided by the number of iterations.
In \ffref{runtime_perf}, we plot the improvement over time of the SI-SDR and SI-SIR from their initial value.

Starting by the determined case, we observe that owing to the pairwise updates, AuxIVA-IP2 is much faster than AuxIVA-IP.
In the overdetermined case, for a single source, FIVE is the unquestionable winner, with a striking convergence speed that was noted in the original paper~\cite{Scheibler:2019tt}.
OverIVA-DX/BG is essentially the same algorithm in this case and thus performs alike.
Next, and in order, are OverIVA-IP/OverIVA-IP-NP, AuxIVA-IP2, OGIVEs, AuxIVA-IP.
For two and three sources, OverIVA-IP2/OverIVA-IP2-NP are the fastest converging methods.
AuxIVA-IP2 is very competitive for two and three microphones, but is left behind for 5 and 8.
OverIVA-DX/BG is slightly slower than OverIVA-IP2 methods, but with little difference for 5 and 8 microphones.
However, it achieves a slightly larger final value.

\subsection{Investigation of Separation Success Under Different Background Conditions}

As discussed in \sref{discussion}, the selection of the target source relies entirely on the cost function.
Here, we investigate under what conditions the expected source is correctly recovered.
We define the separation as successful when the output SI-SIR is larger than \SI{0}{\decibel}.
We focus on three different parameters: SINR, Gaussianity of the background, and reverberation time.
The SINR is set according to \eref{sinr} at \SI{-5}{\decibel}, \SI{0}{\decibel}, \SI{5}{\decibel}, and \SI{10}{\decibel}.
The number of interferers goes from one to ten.
This also roughly corresponds to the Gaussianity of the background.
For a single interferer, the background is very non-Gaussian, while for ten, it is close to Gaussian.
All interferers are placed beyond the critical distance. 
Sources closer than the critical distance experience a shorter reverberation time.
We compare placements of target sources from \SI{20}{\centi\meter} up to the critical distance of \SI{2}{\meter}.
We do not include OverIVA with non-parametric updates as it was shown to perform identically to parametric updates in the previous experiment.
Similarly, we do not include AuxIVA-IP, as AuxIVA-IP2 has been shown to be superior.

\begin{figure}
  \centering
  \begin{subfigure}{\linewidth}
    \centering
    \includegraphics{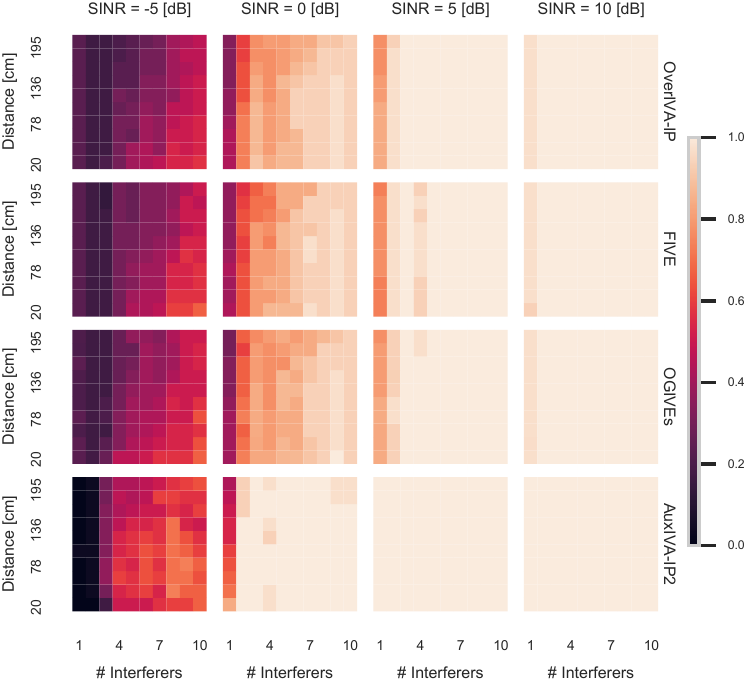}
    \caption{One target source}
    \flabel{success_1tgt}
  \end{subfigure}
  \begin{subfigure}{\linewidth}
    \centering
    \vspace{0.2cm}
    \includegraphics{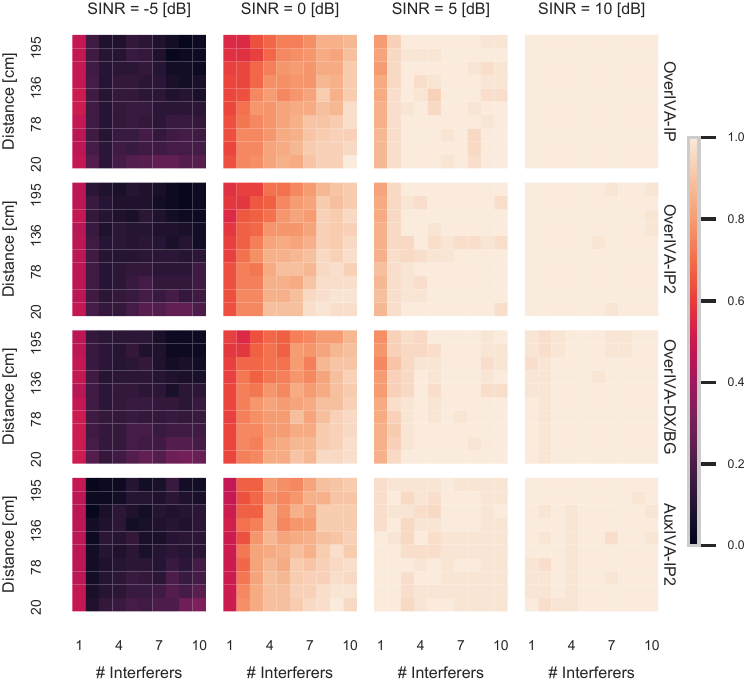}
    \caption{Two target sources}
  \end{subfigure}
  \begin{subfigure}{\linewidth}
    \vspace{0.2cm}
    \centering
    \includegraphics{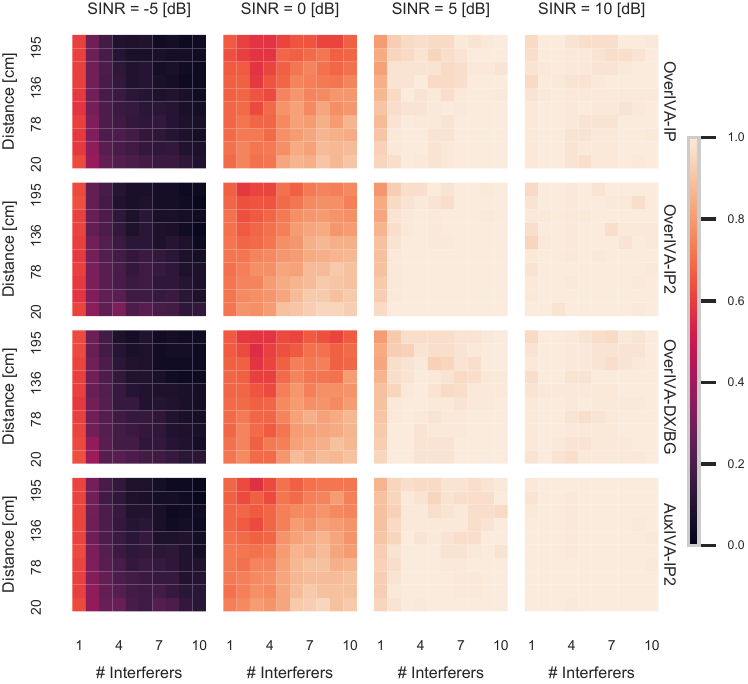}
    \caption{Three target sources}
  \end{subfigure}
  \caption{Probability of success of algorithms as a function of the number of interferers and distance to the microphones. Each row shows a single algorithm labeled on the right of the plot. Columns are for increasing SINR.}
  \flabel{success}
\end{figure}

\subsubsection{Setup}

We simulate a \SI{9}{\meter}$\times$\SI{12}{\meter}$\times$\SI{4.5}{\meter} room with reverberation time of \SI{415}{\milli\second}.
We use an array with seven microphones --- one at the center and six placed uniformly on a circle of radius \SI{2}{\centi\meter}.
The array is placed at (\SI{4.496}{\meter}, \SI{5.889}{\meter}, \SI{2.327}{\meter}), i.e., a little bit off the center of the room.
All target sources are placed equidistant from the array center and separated by equal angles.
Interferers are placed at random in the room, but beyond the critcal distance, i.e., $\sim$\SI{2}{\meter}.
We repeat the experiment for 30 different placements of interferers and noise patterns.
All MM algorithms are run for 50 iterations and OGIVEs is run for 1000.

\subsubsection{Results}

\ffref{success} shows the empirical success probability for the algorithms in different conditions.
The SINR increases with columns from left to right.
Each row is for one algorithm labeled on the right.
Each cell contains a $10\times 10$ color map with each pixel corresponding to a combination of distance and number of interferers.
Darker and lighter colors indicate low and high probabities of success, respectively.

We find the dominant factor for successful recovery to be the SINR.
It needs to be positive for reliable source recovery.
For a single target source AuxIVA-IP2 is most reliable with success probability close to one whenever SINR$\geq$\SI{0}{\decibel}, except for a single interferer at \SI{0}{\decibel}.
In the latter case, the algorithm is expected to be wrong half the time since there are only two sources of the same power.
For overdetermined algorithms for a single source at \SI{0}{\decibel}, the reliability increases with the number of interferers.
Separation is reliable at \SI{5}{\decibel} and over two interferers, or \SI{10}{\decibel} and two or more interferers.
For two and three sources, AuxIVA-IP2 loses its edge and behaves very similarly to overdetermined algorithms.
Overall the more sources to separate there is, the harder the problem gets.
This is reflected by a decreasing success rate at \SI{0}{\decibel}.

Overall, the effect of the source distance is small, but noticeable.
Sources closer to the microphones are more likely to be successfully separated.
One can also notice that for two and three sources, SINR \SI{-5}{\decibel}, and a single interferer, the success probability is larger than for more interferers.
This happens because, in this case, by a simple combinatorial argument, we are guaranteed to include some of the correct sources in the output.

\if0
\subsection{Further things to explore}

\begin{itemize}
  \item Initialization by PCA (done)
  \item Initialization by FIVE
  \item Joint updates one target with background (done)
  \item Joint updates two targets, orthogonality with background (seems to work great!) (done)
\end{itemize}

\begin{itemize}
  \item Effect of Gaussianity of background (done)
  \item Experiment with different types of backgrounds: people, fan noise, music
\end{itemize}
\fi

\section{Conclusion}
\seclabel{conclusion}

We introduced JISA-MM, a framework for joint independent subspace analysis based on majorization-minimization optimization.
This framework applies to super-Gaussian contrast functions, and we derive several update rules leading to efficient iterative algorithms to minimize the negative log-likelihood of the observed signals.
The resulting algorithms are hyperparameter free and easy to implement in practice.
We further apply JISA-MM to the BSE and BMSE, whereas a few target sources are to be separated from a Gaussian background.
We show how some existing algorithms, as well as new ones, can be derived from our general framework.
In numerical experiments, we compare all these algorithms in terms of separation, speed, and robustness to model mismatch.
In the single source case, FIVE~\cite{Scheibler:2019tt} flatly beats all other methods in terms of convergence speed.
For two and three sources, OverIVA-IP2 and OverIVA-DX/BG are the strongest contender with the later being somewhat more reliable for larger numbers of microphones.

In terms of robustness, our experiments show that all the overdetermined methods require at least positive SINR to succeed.
In the single source case, a very mismatched background model also seems to negatively affect performance.
There, AuxIVA appears more robust at low SINR, but loses its advantage when going to two and three target sources.
The need for positive SINR seems to be a consequence of fully relying on the cost function for target separation.
We conclude that, without further guidance, the dominant sources are extracted.
Alternatively, several recent works propose to guide source extraction with spatial constraints~\cite{Brendel:2020wc}, speaker identification via x-vector~\cite{Jansky:2019wa}, or a pilot signal correlated to the source~\cite{Jansky:2020wn}.
All these methods may benefit from the efficient algorithms proposed in this paper.

\if0
\section*{Acknowledgment}
The authors would like to thank...
\fi

\bibliographystyle{IEEEtran}
\bibliography{refs}

\begin{thebibliography}{10}
\providecommand{\url}[1]{#1}
\csname url@samestyle\endcsname
\providecommand{\newblock}{\relax}
\providecommand{\bibinfo}[2]{#2}
\providecommand{\BIBentrySTDinterwordspacing}{\spaceskip=0pt\relax}
\providecommand{\BIBentryALTinterwordstretchfactor}{4}
\providecommand{\BIBentryALTinterwordspacing}{\spaceskip=\fontdimen2\font plus
\BIBentryALTinterwordstretchfactor\fontdimen3\font minus
  \fontdimen4\font\relax}
\providecommand{\BIBforeignlanguage}[2]{{%
\expandafter\ifx\csname l@#1\endcsname\relax
\typeout{** WARNING: IEEEtran.bst: No hyphenation pattern has been}%
\typeout{** loaded for the language `#1'. Using the pattern for}%
\typeout{** the default language instead.}%
\else
\language=\csname l@#1\endcsname
\fi
#2}}
\providecommand{\BIBdecl}{\relax}
\BIBdecl

\bibitem{Makino:2018iq}
S.~Makino, Ed., \emph{Audio Source Separation}, ser. Signals and Communication
  Technology.\hskip 1em plus 0.5em minus 0.4em\relax Cham, CH: Springer
  International Publishing, 2018.

\bibitem{Makino:2007vg}
S.~Makino, H.~Sawada, and T.-W. Lee, Eds., \emph{Blind Speech Separation}, ser.
  Signals and Communication Technology.\hskip 1em plus 0.5em minus 0.4em\relax
  Cham, CH: Springer, 2007.

\bibitem{Cano:2019dw}
E.~Cano, D.~FitzGerald, A.~Liutkus, M.~D. Plumbley, and F.-R. St\"{o}ter,
  ``Musical source separation: An introduction,'' \emph{IEEE Signal Process.
  Mag.}, vol.~36, no.~1, pp. 31--40, Jan. 2019.

\bibitem{ZARZOSO:1997dt}
V.~Zarzoso, A.~K. Nandi, and E.~Bacharakis, ``Maternal and foetal {ECG}
  separation using blind source separation methods,'' \emph{IMA J Math Appl Med
  Biol}, vol.~14, no.~3, pp. 207--225, Sep. 1997.

\bibitem{Cong2019}
F.~Cong, ``Blind source separation,'' in \emph{EEG Signal Processing and
  Feature Extraction}, L.~Hu and Z.~Zhang, Eds.\hskip 1em plus 0.5em minus
  0.4em\relax Singapore: Springer, 2019, ch.~7, pp. 117--140.

\bibitem{Yang:2018kr}
H.~Yang, H.~Zhang, J.~Li, L.~Yang, and W.~Ding, ``Baseband communication signal
  blind separation algorithm based on complex nonparametric probability density
  estimation,'' \emph{IEEE Access}, vol.~6, pp. 22\,434--22\,440, Apr. 2018.

\bibitem{Comon:1994kr}
P.~Comon, ``Independent component analysis, a new concept?'' \emph{Signal
  Processing}, vol.~36, no.~3, pp. 287--314, 1994.

\bibitem{Hiroe:2006ib}
A.~Hiroe, ``Solution of permutation problem in frequency domain {ICA}, using
  multivariate probability density functions,'' in \emph{Advances in Cryptology
  {\textendash} ASIACRYPT 2016}.\hskip 1em plus 0.5em minus 0.4em\relax Berlin,
  Heidelberg: Springer Berlin Heidelberg, 2006, pp. 601--608.

\bibitem{Kim:2006ex}
T.~Kim, H.~T. Attias, S.-Y. Lee, and T.-W. Lee, ``Blind source separation
  exploiting higher-order frequency dependencies,'' \emph{IEEE Trans. Audio,
  Speech, Lang. Process.}, vol.~15, no.~1, pp. 70--79, Dec. 2006.

\bibitem{YiOuLi:2009gc}
Y.-O. Li, T.~Adal{\i}, W.~Wang, and V.~D. Calhoun, ``Joint blind source
  separation by multiset canonical correlation analysis,'' \emph{IEEE Trans.
  Signal Process.}, vol.~57, no.~10, pp. 3918--3929, Oct. 2009.

\bibitem{Smaragdis:1998kl}
P.~Smaragdis, ``Blind separation of convolved mixtures in the frequency
  domain,'' \emph{Neurocomputing}, vol.~22, no. 1-3, pp. 21--34, Nov. 1998.

\bibitem{Comon:1995vw}
P.~Comon, ``Supervised classification --- a probabilistic approach,''
  \emph{ESANN}, pp. 111--128, Apr. 1995.

\bibitem{DeLathauwer:1995ww}
L.~De~Lathauwer, D.~Callaerts, B.~De~Moor, and J.~Vandewalle, ``Fetal
  electrocardiogram extraction by source subspace separation,'' in \emph{Proc.
  IEEE SP/Athos Workshop on Higher-Order Statistics}, Girona, Es, Jun. 1995,
  pp. 134--138.

\bibitem{Cardoso:1998cw}
J.~F. Cardoso, ``Multidimensional independent component analysis,'' in
  \emph{Proc. IEEE ICASSP}, Seattle, WA, USA, May 1998, pp. 1941--1944.

\bibitem{Silva:ca}
R.~F. Silva, S.~M. Plis, T.~Adal{\i}, and V.~D. Calhoun, ``Multidataset
  independent subspace analysis extends independent vector analysis,'' in
  \emph{Proc. IEEE ICIP}, Paris, FR, Oct. 2014, pp. 2864--2868.

\bibitem{Lahat:2016fg}
D.~Lahat and C.~Jutten, ``Joint independent subspace analysis using
  second-order statistics,'' \emph{IEEE Transaction on Signal Processing},
  vol.~64, no.~18, pp. 4891--4904, Sep. 2016.

\bibitem{Lahat:2015bp}
------, ``Joint independent subspace analysis - a quasi-{N}ewton algorithm,''
  in \emph{Proc. LVA/ICA}, vol. 9237, no.~1, 2015, pp. 111--118.

\bibitem{Lange:2016wp}
K.~Lange, \emph{{MM} optimization algorithms}.\hskip 1em plus 0.5em minus
  0.4em\relax SIAM, 2016.

\bibitem{Ono:2010hh}
N.~Ono and S.~Miyabe, ``Auxiliary-function-based independent component analysis
  for super-gaussian sources,'' \emph{Proc. LVA/ICA}, vol. 6365, no.~6, pp.
  165--172, Sep. 2010.

\bibitem{Ono:2011tn}
N.~Ono, ``Stable and fast update rules for independent vector analysis based on
  auxiliary function technique,'' in \emph{Proc. IEEE WASPAA}, New Paltz, NY,
  USA, Oct. 2011, pp. 189--192.

\bibitem{Ono:2020iva}
N.~Ono and R.~Scheibler, ``Auxiliary function based independent vector analysis
  for super {G}aussian sources,'' 2020, in preparation.

\bibitem{Yeredor:hr}
A.~Yeredor, ``On hybrid exact-approximate joint diagonalization,'' in
  \emph{Proc. IEEE CAMSAP}, Dec. 2009, pp. 312--315.

\bibitem{Weiss:2017il}
A.~Weiss, A.~Yeredor, S.~Cheema, and M.~Haardt, ``The extended
  {\textquotedblleft}sequentially drilled{\textquotedblright} joint congruence
  transformation and its application in gaussian independent vector analysis,''
  \emph{IEEE Trans. Signal Process.}, vol.~65, no.~23, pp. 6332--6344, Dec.
  2017.

\bibitem{Ono:2018wk}
N.~Ono, ``Fast algorithm for independent component/vector/low-rank matrix
  analysis with three or more sources,'' in \emph{Proc. Acoustical Society of
  Japan}, Mar. 2018, pp. 437--438.

\bibitem{Kitamura:2016vj}
D.~Kitamura, N.~Ono, H.~Sawada, H.~Kameoka, and H.~Saruwatari, ``Determined
  blind source separation unifying independent vector analysis and nonnegative
  matrix factorization,'' \emph{IEEE/ACM Trans. Audio Speech Lang. Process.},
  vol.~24, no.~9, pp. 1626--1641, Sep. 2016.

\bibitem{Makishima:2019fl}
N.~Makishima, S.~Mogami, N.~Takamune, D.~Kitamura, H.~Sumino, S.~Takamichi,
  H.~Saruwatari, and N.~Ono, ``Independent deeply learned matrix analysis for
  determined audio source separation,'' \emph{IEEE/ACM Trans. Audio Speech
  Lang. Process.}, vol.~27, no.~10, pp. 1601--1615, 2019.

\bibitem{Kameoka:2019be}
H.~Kameoka, L.~Li, S.~Inoue, and S.~Makino, ``Supervised determined source
  separation with multichannel variational autoencoder,'' \emph{Neural
  computation}, vol.~31, no.~9, pp. 1891--1914, Sep. 2019.

\bibitem{Sawada:fk}
H.~Sawada, S.~Araki, and S.~Makino, ``Measuring dependence of bin-wise
  separated signals for permutation alignment in frequency-domain {BSS},'' in
  \emph{Proc. IEEE ISCAS}, New Orleans, LA, USA, May 2007, pp. 3247--3250.

\bibitem{Kitamura:2015fi}
D.~Kitamura, N.~Ono, H.~Sawada, H.~Kameoka, and H.~Saruwatari, ``Relaxation of
  rank-1 spatial constraint in overdetermined blind source separation,''
  \emph{Proc. EUSIPCO}, 2015.

\bibitem{Wang:ci}
L.~Wang, J.~D. Reiss, and A.~Cavallaro, ``Over-determined source separation and
  localization using distributed microphones,'' \emph{IEEE/ACM Transactions on
  Audio, Speech, and Language Processing}, vol.~24, no.~9, pp. 1573--1588, Sep.
  2016.

\bibitem{Wang:2018bz}
L.~Wang and A.~Cavallaro, ``Pseudo-determined blind source separation for
  ad-hoc microphone networks,'' \emph{IEEE/ACM Trans. Audio, Speech, Lang.
  Process.}, vol.~26, no.~5, May 2018.

\bibitem{Murata:1998ul}
N.~Murata and S.~Ikeda, ``An on-line algorithm for blind source separation on
  speech signals,'' in \emph{Proc. NOLTA98}, 1998, pp. 923--926.

\bibitem{Amari:1999jy}
S.~Amari, ``Natural gradient learning for over- and under-complete bases in
  {ICA},'' \emph{Neural computation}, vol.~11, no.~8, pp. 1875--1883, 1999.

\bibitem{Nishikawa:2004ug}
T.~Nishikawa, H.~Abe, H.~Saruwatari, and K.~Shikano, ``Overdetermined blind
  separation of acoustic signals based on miso-constrained frequency-domain
  {ICA},'' in \emph{Proc. ICA}, Kyoto, JP, Apr. 2004, pp. IV--3143 -- 3146.

\bibitem{Nishikawa:2004bn}
------, ``Overdetermined blind separation for convolutive mixtures of speech
  based on multistage {ICA} using subarray processing,'' in \emph{Proc. IEEE
  ICASSP}, Montreal, CA, May 2004, pp. I--225.

\bibitem{Osterwise:2014fe}
C.~Osterwise and S.~L. Grant, ``On over-determined frequency domain {BSS},''
  \emph{IEEE/ACM Trans. Audio, Speech, Lang. Process.}, vol.~22, no.~5, pp.
  956--966, May 2014.

\bibitem{Joho:2000tt}
M.~Joho and R.~L.~P. Mathis, H, ``Overdetermined blind source separation: Using
  more sensors than source signals in a noisy mixture,'' in \emph{Proc. ICA},
  Helsinki, Finland, Jun. 2000, pp. 81--86.

\bibitem{Lee:2007bw}
I.~Lee, T.~Kim, and T.-W. Lee, ``Independent vector analysis for convolutive
  blind speech separation,'' in \emph{Blind Speech Separation}.\hskip 1em plus
  0.5em minus 0.4em\relax Dordrecht: Springer, Dordrecht, 2007, pp. 169--192.

\bibitem{Fu:jx}
X.~Fu and W.-K. Ma, ``A simple closed-form solution for overdetermined blind
  separation of locally sparse quasi-stationary sources,'' in \emph{Proc. IEEE
  ICASSP ICASSP 2012 - 2012 IEEE International Conference on Acoustics, Speech
  and Signal Processing}, Kyoto, JP, Mar. 2012, pp. 2409--2412.

\bibitem{Souden:2006wf}
M.~Souden, S.~Affes, and J.~Benesty, ``A new approach to blind separation of
  two sources with three sensors,'' \emph{Vehicular Technology Conference}, pp.
  1--5, 2006.

\bibitem{Diamantaras:ib}
K.~I. Diamantaras and T.~Papadimitriou, ``Subspace-based channel shortening for
  the blind separation of convolutive mixtures,'' \emph{IEEE Trans. Signal
  Process.}, vol.~54, no.~10, pp. 3669--3677, 2006.

\bibitem{Koldovsky:fn}
Z.~Koldovsk{\'{y}} and P.~Tichavsk{\'{y}}, ``Gradient algorithms for complex
  non-gaussian independent component/vector extraction, question of
  convergence,'' \emph{IEEE Trans. Signal Process.}, vol.~67, no.~4, pp.
  1050--1064, Dec. 2018.

\bibitem{Scheibler:2019vx}
R.~Scheibler and N.~Ono, ``Independent vector analysis with more microphones
  than sources,'' in \emph{Proc. IEEE WASPAA}, New Paltz, NY, USA, Oct. 2019,
  pp. 185--189.

\bibitem{Scheibler:2019tt}
------, ``Fast independent vector extraction by iterative {SINR}
  maximization,'' in \emph{Proc. IEEE ICASSP}, Barcelona, ES, May 2020,
  accepted.

\bibitem{Ikeshita:2020ic}
R.~Ikeshita, T.~Nakatani, and S.~Araki, ``Overdetermined independent vector
  analysis,'' in \emph{Proc. IEEE ICASSP}, Barcelona, ES, May 2020, accepted.

\bibitem{LeRoux:2018tq}
J.~Le~Roux, S.~Wisdom, H.~Erdogan, and J.~R. Hershey, ``{SDR} --- half-baked or
  well done?'' in \emph{Proc. IEEE ICASSP}, Brighton, UK, May 2019, pp.
  626--630.

\bibitem{Allen:1977in}
J.~Allen, ``Short term spectral analysis, synthesis, and modification by
  discrete {F}ourier transform,'' \emph{IEEE Trans. Acoust., Speech, Signal
  Process.}, vol.~25, no.~3, pp. 235--238, Jun. 1977.

\bibitem{Benveniste:1990vs}
A.~Benveniste, M.~M{\'e}tivier, and P.~Priouret, ``Adaptive algorithms and
  stochastic approximations,'' in \emph{Applications of Mathematics}.\hskip 1em
  plus 0.5em minus 0.4em\relax Springer, 1993, vol.~22.

\bibitem{DeLeeuw:1997vw}
J.~de~Leeuw and W.~J. Heiser, ``Convergence of correction matrix algorithms for
  multidimensional,'' in \emph{Geometric Representations of Relational Data},
  J.~C. Lingoes, E.~Roskam, and I.~Borg, Eds.\hskip 1em plus 0.5em minus
  0.4em\relax Ann Arbor, MI: Geometric representations of relational data,
  1977, pp. 735--752.

\bibitem{Daubechies:2010hf}
I.~Daubechies, R.~DeVore, M.~Fornasier, and C.~{Sinan G{\"u}nt{\"u}rk},
  ``Iteratively reweighted least squares minimization for sparse recovery,''
  \emph{Communications on Pure and Applied Mathematics}, vol.~63, no.~1, pp.
  1--38, Jan. 2010.

\bibitem{Kominek:2004vf}
J.~Kominek and A.~W. Black, ``{CMU ARCTIC} databases for speech synthesis,''
  Language Technologies Institute, School of Computer Science, Carnegie Mellon
  University, Tech. Rep. CMU-LTI-03-177, 2003.

\bibitem{Scheibler:2018di}
R.~Scheibler, E.~Bezzam, and I.~Dokmani{\'c}, ``Pyroomacoustics: A {P}ython
  package for audio room simulations and array processing algorithms,'' in
  \emph{Proc. IEEE ICASSP}, Calgary, CA, Apr. 2018, pp. 351--355.

\bibitem{Matsuoka:2002da}
K.~Matsuoka, ``Minimal distortion principle for blind source separation,'' in
  \emph{Proc. SICE}, Aug. 2002, pp. 2138--2143.

\bibitem{Kuttruff:2009uq}
H.~Kuttruff, \emph{Room acoustics}.\hskip 1em plus 0.5em minus 0.4em\relax CRC
  Press, 2009.

\bibitem{Brendel:2020wc}
A.~Brendel, T.~Haubner, and W.~Kellermann, ``A unified bayesian view on
  spatially informed source separation and extraction based on independent
  vector analysis,'' \emph{arXiv}, Jan. 2020.

\bibitem{Jansky:2019wa}
J.~Jansk\'{y}, J.~M\'{a}lek, J.~{\v C}mejla, T.~Kounovsk{\'{y}},
  Z.~Koldovsk{\'{y}}, and J.~{\v Z}{\v{d}}{\'a}nsk{\'{y}}, ``Adaptive blind
  audio source extraction supervised by dominant speaker identification using
  x-vectors,'' \emph{arXiv}, Oct. 2019.

\bibitem{Jansky:2020wn}
J.~Jansk{\'{y}}, Z.~Koldovsk{\'{y}}, J.~M{\'a}lek, T.~Kounovsk{\'{y}}, and
  J.~{\v C}mejla, ``Fast algorithm for blind independence-based extraction of a
  moving speaker,'' \emph{arXiv}, Feb. 2020.

\end{thebibliography}

\if0
\appendices
\section{Proof of the First Zonklar Equation}
Appendix one text goes here.
\fi

\ifCLASSOPTIONcaptionsoff
  \newpage
\fi


\begin{IEEEbiography}[{\includegraphics[width=1in,height=1.25in,clip,keepaspectratio]{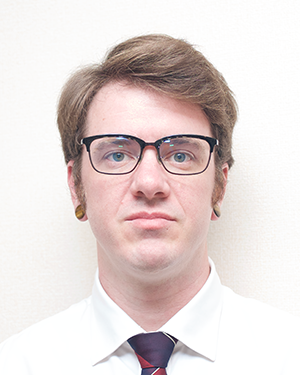}}]{Robin Scheibler} (M'07) is a specially appointed associate professor at the Tokyo Metropolitan University (Tokyo, Japan).
  Robin received his B.Sc, M.Sc, and Ph.D. from Ecole Polytechnique Fédérale de Lausanne (EPFL, Switzerland).
  He also worked at the research labs of NEC Corporation (Kawasaki, Japan) and IBM Research (Z\"{u}rich, Switzerland).
  From March 2020, he will be a researcher at LINE Corporation (Tokyo, Japan).
  Robin's research interests are in efficient algorithms for signal processing, and array signal processing more particularly.
  He also likes to build large microphone arrays and is the lead developer of pyroomacoustics, an open source library for room acoustics simulation and array signal processing.
\end{IEEEbiography}
\begin{IEEEbiography}[{\includegraphics[width=1in,height=1.25in,clip,keepaspectratio]{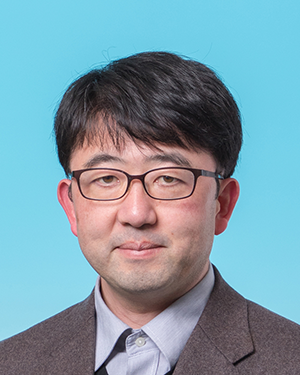}}]{Nobutaka Ono}
  (M'02--SM'13) received the B.E., M.S., and Ph.D degrees in Mathematical
  Engineering and Information Physics from the University of Tokyo, Japan, in
  1996, 1998, 2001, respectively. He joined the Graduate School of Information
  Science and Technology, the University of Tokyo, Japan, in Apr. 2001 as a
  Research Associate and became a Lecturer in Apr. 2005. He moved to the
  National Institute of Informatics, Japan, as an Associate Professor in Apr.
  2011 and became a Professor in Sep. 2017. He moved to Tokyo Metropolitan
  University in Oct. 2017. His research interests include acoustic signal
  processing, specifically, microphone array processing, source localization
  and separation, machine learning and optimization algorithms for them.
  He is
  the author or co-author of more than 240 articles in international journal
  papers and peer-reviewed conference proceedings. He was a Tutorial speaker at
  ISMIR 2010 and ICASSP 2018, a special session chair in EUSIPCO 2013, 2015,
  2017, 2018, and 2019, a chair of SiSEC (Signal Separation Evaluation
  Campaign) evaluation committee in 2013 and 2015. 
  He was an Associate Editor
  of the IEEE Transactions on Audio, Speech and Language Processing during 2012
  to 2015.He has been a member of IEEE Audio and Acoustic Signal Processing
  (AASP) Technical Committee since 2014. He is a senior member of the IEEE
  Signal Processing Society, and a member of the Acoustical Society of Japan
  (ASJ), the Institute of Electronics, Information and Communications Engineers
  (IEICE), the Information Processing Society of Japan (IPSJ), and the Society
  of Instrument and Control Engineers (SICE) in Japan. He received the Sato
  Paper Award and the Awaya Award from ASJ in 2000 and 2007, respectively, the
  Igarashi Award at the Sensor Symposium on Sensors, Micromachines, and Applied
  Systems from IEEJ in 2004, the best paper award from IEEE ISIE in 2008,
  Measurement Division Best Paper Award from SICE in 2013, the best paper award
  from IEEE IS3C in 2014, the excellent paper award from IIHMSP in 2014, the
  unsupervised learning ICA pioneer award from SPIE.DSS in 2015, the Sato Paper
  Award from ASJ and two TAF Telecom System Technology Awards in 2018, and Best
  Paper Award from APSIPA in 2018.
\end{IEEEbiography}





\end{document}